\newtheorem{theorem}{Theorem}
\newtheorem{claim}[theorem]{Claim}
\newtheorem{fact}[theorem]{Claim}
\title{A recognition algorithm for simple-triangle graphs}
\author{Asahi Takaoka}
\address{
  Department of Information Systems Creation, 
  Kanagawa University, 
  Rokkakubashi 3-27-1 Kanagawa-ku, 
  Kanagawa, 221--8686, Japan 
}
\email{takaoka@kanagawa-u.ac.jp}
\date{\today}
\keywords{
Alternately orientable graphs, 
Cocomparability graphs, 
PI graphs, 
Recognition algorithm, 
Simple-triangle graphs, 
Vertex ordering characterization}
\subjclass[2010]{
68R10, 	
05C75, 	
05C85, 	
05C62	
}
\begin{document}

\begin{abstract}
A simple-triangle graph is the intersection graph of 
triangles that are defined by a point on a horizontal line and 
an interval on another horizontal line. 
The time complexity of the recognition problem for simple-triangle graphs 
was a longstanding open problem, which was recently settled. 
This paper provides a new recognition algorithm for simple-triangle graphs 
to improve the time bound from $O(n^2 \overline{m})$ to $O(nm)$, 
where $n$, $m$, and $\overline{m}$ are 
the number of vertices, edges, and non-edges of the graph, respectively. 
The algorithm uses the vertex ordering characterization 
that a graph is a simple-triangle graph if and only if 
there is a linear ordering of the vertices 
containing both an alternating orientation of the graph and 
a transitive orientation of the complement of the graph. 
We also show, as a byproduct, that 
an alternating orientation can be obtained in $O(nm)$ time for cocomparability graphs, 
and it is NP-complete to decide whether a graph has 
an orientation that is alternating and acyclic. 
\end{abstract}
\maketitle

\section{Introduction}
A graph is an \emph{intersection graph} 
if there is a set of objects such that 
each vertex corresponds to an object 
and two vertices are adjacent if and only if 
the corresponding objects have a nonempty intersection. 
Such a set of objects is a \emph{representation} of the graph. 
See~\cite{BLS99,Golumbic04,MM99,Spinrad03} for survey. 
Let $L_1$ and $L_2$ be two horizontal lines in the plane with $L_1$ above $L_2$. 
\emph{Trapezoid graphs} are the intersection graphs of
trapezoids that are defined by an interval on $L_1$ and an interval on $L_2$. 
Trapezoid graphs have been introduced in~\cite{CK87-CN,DGP88-DAM} 
as a generalization of both interval graphs and permutation graphs. 
Many recognition algorithms and structural characterizations are provided 
for trapezoid graphs~\cite{CC96-DAM,Cogis82b-DM,GT04,FHM94-SIAMDM,Langley95-DAM,MS94-JAL,MC11-DAM}. 

We obtain some interesting subclasses of trapezoid graphs 
by restricting the trapezoids in the representation. 
A trapezoid graph is a \emph{simple-triangle graph} 
if every trapezoid in the representation is a triangle 
with the apex on $L_1$ and the base on $L_2$. 
Similarly, a trapezoid graph is a \emph{triangle graph} 
if every trapezoid in the representation is a triangle, 
but there is no restriction on which line contains the apex and the base. 
Simple-triangle graphs and triangle graphs have been introduced in~\cite{CK87-CN} 
and studied under the name of \emph{PI graphs} and \emph{PI* graphs}, 
respectively~\cite{BLS99,COS08-ENDM,CK87-CN,Spinrad03}, 
where \emph{PI} stands for \emph{Point-Interval}. 
Moreover, a trapezoid graph is a \emph{parallelogram graph} 
if every trapezoid is a parallelogram, and 
parallelogram graphs coincide with \emph{bounded tolerance graphs}~\cite{BFIL95-DAM,GMT84-DAM}. 
These three graph classes are proper subclasses of trapezoid graphs, 
and they contain both interval graphs and permutation graphs 
as proper subclasses. 

The time complexity of the recognition problem of these three graph classes 
was a longstanding open problem~\cite{BLS99,Spinrad03}, 
which was recently settled. 
While the problem is NP-complete for triangle graphs~\cite{Mertzios12-TCS} 
and parallelogram graphs~\cite{MSZ11-SIAMCOMP}, 
an $O(n^2 \overline{m})$-time recognition algorithm 
has been given for simple-triangle graphs~\cite{Mertzios15-SIAMDM}, 
where $n$ and $\overline{m}$ are the number of vertices and non-edges 
of the graph, respectively. 
This algorithm reduces the recognition problem of simple-triangle graph to 
the \emph{linear-interval cover problem}, a problem of covering an associated 
bipartite graph by two chain graphs satisfying additional conditions. 
In~\cite{Takaoka17-LNCS}, we showed an alternative algorithm 
for the linear-interval cover problem, 
but 
this does not improve the running time. 
Meanwhile, we showed in~\cite{Takaoka18-DM} 
a vertex ordering characterization that 
a graph is a simple-triangle graph if and only if 
there is a linear ordering of the vertices 
containing both an alternating orientation of the graph and 
a transitive orientation of the complement of the graph. 
Using this vertex ordering characterization, we show in this paper 
an $O(nm)$-time recognition algorithm for simple-triangle graphs, 
where $m$ is the number of edges of the graph. 

Our algorithm is shown in the next section, and 
correctness of the algorithm is proved in Section~\ref{section:correctness}. 
We finally discuss our results and 
further research in Section~\ref{section:conclusion}.

\section{The recognition algorithm}\label{section:algorithm}
\subsection{Preliminaries}

\paragraph{Notation}
In this paper, we will deal only with finite graphs 
having no loops and multiple edges. 
Unless stated otherwise, graphs are assumed to be undirected, 
but we also deal with graphs having directed edges. 
We write $uv$ for the \emph{undirected edge} 
joining a vertex $u$ and a vertex $v$, and 
we write $(u, v)$ for the \emph{directed edge} 
from $u$ to $v$. 
For a graph $G = (V, E)$, we sometimes write $V(G)$ for 
the vertex set $V$ of $G$ and write $E(G)$ for the edge set $E$ of $G$. 

Let $G = (V, E)$ be an undirected graph. 
The \emph{complement} of $G$ 
is the graph $\overline{G} = (V, \overline{E})$ 
such that $uv \in \overline{E}$ if and only if $uv \notin E$ 
for any two vertices $u, v \in V$. 
A sequence of distinct vertices $(v_0, v_1, \dots, v_k)$ is 
a \emph{path from $v_0$ to $v_k$} in $G$ 
if $v_0v_1, v_1v_2, \dots, v_{k-1}v_k \in E$. 
These edges are \emph{the edges on the path}. 
The \emph{length} of the path is the number $k$ of the edges 
on the path. 
A path $(v_0, v_1, \dots, v_k)$ is a \emph{cycle} in $G$ 
if in addition $v_kv_0 \in E$. 
The edges $v_0v_1, v_1v_2, \dots, v_{k-1}v_k$ and $v_kv_0$ 
are \emph{the edges on the cycle}. 
The \emph{length} of the cycle is the number $k+1$ of the edges 
on the cycle. 
A \emph{chord} of a path (cycle) is an edge joining two vertices 
that are not consecutive on the path (cycle). 
A path (cycle) is \emph{chordless} if it contains no chords. 

Let $G = (V, E)$ be an undirected graph. 
An \emph{orientation} of $G$ is a graph obtained from $G$ 
by orienting each edge in $E$, that is, 
replacing each edge $uv \in E$ with either $(u, v)$ or $(v, u)$. 
An \emph{oriented graph} is an orientation of some graph. 
A \emph{partial orientation} of $G$ is a graph obtained from $G$ 
by orienting each edge in a subset of $E$. 
A \emph{partially oriented graph} is 
a partial orientation of some graph. 
Notice that a (partially) oriented graph contains 
no pair of edges $(u, v)$ and $(v, u)$ for some vertices $u, v$. 
We will denote a (partial) orientation of a graph 
only by its edge set when the vertex set is clear from the context. 

Let $H = (V, F)$ be a (partially) oriented graph. 
A sequence of distinct vertices $(v_0, v_1, \dots, v_k)$ 
is a \emph{directed path from $v_0$ to $v_k$} in $H$ 
if $(v_0, v_1), (v_1, v_2), \dots, (v_{k-1}, v_k) \in F$. 
A directed path $(v_0, v_1, \dots, v_k)$ is 
a \emph{directed cycle} in $H$ if in addition $(v_k, v_0) \in F$. 
The edges on the path (cycle) and 
the length of the path (cycle) are 
defined analogously to the undirected case. 
A (partial) orientation of a graph is called \emph{acyclic} 
if it contains no directed cycles. 
A \emph{linear extension} (or topological sort) of 
an acyclic (partial) orientation $F$ is 
a linear ordering of the vertices such that 
$(u, v) \in F$ if and only if $u$ is a predecessor of $v$ in the ordering. 
Let $F_s \subseteq F$ be a set of directed edges in $H$. 
The \emph{reversal} $F_s^{-1}$ of $F_s$ is the set of directed edges 
obtained from $F_s$ by reversing all the edges in $F_s$, that is, 
$F_s^{-1} = \{ (u, v)\colon\ (v, u) \in F_s \}$.

\paragraph{Comparability graphs}
An orientation $F$ of a graph $G$ is a \emph{transitive orientation} 
if $(u, v) \in F$ and $(v, w) \in F$ then $(u, w) \in F$. 
A graph is a \emph{comparability graph} 
if it has a transitive orientation. 
The complement of a comparability graph is a \emph{cocomparability graph}. 
The class of cocomparability graphs contains the class of trapezoid graphs 
as a proper subclass~\cite{DGP88-DAM,CK87-CN}, and hence 
the complement of any simple-triangle graph has a transitive orientation. 
Note that every cocomparability graph contains no chordless cycle of length 
greater than or equal to 5 (see~\cite{BLS99,Gallai67} for example), and thus 
every chordless cycle of a cocomparability graph has length at most 4. 

An orientation $F$ of a graph $G$ is \emph{quasi-transitive} 
if $(u, v) \in F$ and $(v, w) \in F$ then $uw \in E(G)$ and therefore either $(u, w) \in F$ or $(w, u) \in F$. 
In other words, an orientation $F$ of $G$ is quasi-transitive 
if for any path of three vertices $(u, v, w)$ in $G$, 
either $(u, v), (w, v) \in F$ or $(v, u), (v, w) \in F$. 
We can see that an orientation is transitive if and only if 
it is quasi-transitive and acyclic. 
Trivially, a graph having a transitive orientation 
also has a quasi-transitive orientation. 
The converse is also known to be true~\cite{Ghouila-Houri62,HH95-JGT}, 
that is, if a graph $G$ has a quasi-transitive orientation $F$ that is not acyclic, 
then $G$ has another quasi-transitive orientation $F'$ that is also acyclic. 

There is a well-known algorithm for 
recognizing comparability graphs and 
producing a transitive orientation of a graph, 
which takes $O(nm)$ time 
(see~\cite{Golumbic04} for example). 
A linear-time algorithm is also known for producing 
a transitive orientation of a comparability graph~\cite{MS99-DM}. 
To be precise, the algorithm produces a linear extension 
of the transitive orientation. 
However, even if the given graph is not a comparability graph, 
the linear-time algorithm may produce an orientation that is not transitive. 
Hence, to recognize comparability graphs, 
we must verify transitivity of the orientation. 
The best known method for verifying transitivity uses 
Boolean matrix multiplication to test whether $A^2 = A$, 
where $A$ is the adjacency matrix of the orientation $F$ 
(see~\cite{Spinrad03} for example). 
An alternative method for verifying transitivity, 
which takes $O(m^{3/2})$ time, is discussed in~\cite[Section~11.1.4]{Spinrad03}. 

Let $\overline{G}$ be the complement of a graph $G$. 
The linear-time algorithm of~\cite{MS99-DM} also can produce a linear extension 
of an orientation $\overline{F}$ of $\overline{G}$ 
such that $\overline{F}$ is transitive if and only if 
$G$ is a cocomparability graph. 
We show a method for verifying transitivity of $\overline{F}$ in $O(nm)$ time. 
Recall that an orientation is transitive if and only if 
it is quasi-transitive and acyclic. 
Obviously, $\overline{F}$ is acyclic. 
We can test whether $\overline{F}$ is quasi-transitive by checking 
for any path of three vertices $(u, v, w)$ in $\overline{G}$, 
either $(u, v), (w, v) \in \overline{F}$ or $(v, u), (v, w) \in \overline{F}$. 
The number of paths in $G$ of three vertices is at most $nm$, 
and they can be found in $O(nm)$ time. 
Thus we have the following. 
\begin{theorem}\label{theorem:cocomparability}
Cocomparability graphs can be recognized in $O(nm)$ time. 
\end{theorem}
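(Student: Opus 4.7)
The plan is to follow the outline given just before the theorem statement: apply the linear-time transitive-orientation procedure of~\cite{MS99-DM} on the complement of $G$ to obtain a linear extension $\pi$ of a candidate orientation $\overline{F}$ of $\overline{G}$, and then verify in $O(nm)$ time that $\overline{F}$ is indeed transitive. Because $\pi$ already certifies acyclicity of $\overline{F}$, the fact that an orientation is transitive exactly when it is quasi-transitive and acyclic reduces the task to testing quasi-transitivity alone. If the test succeeds, $\overline{G}$ is a comparability graph and we accept; otherwise $G$ is not a cocomparability graph and we reject.

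For the verification I would recast the quasi-transitivity condition into a form centered on the edges of $G$. An obstruction to quasi-transitivity of $\overline{F}$ is a triple $(u,v,w)$ with $u$ preceding $v$ and $v$ preceding $w$ in $\pi$, such that $uv \notin E$ and $vw \notin E$ but $uw \in E$; equivalently, an edge $uw \in E$ together with a vertex $v$ lying strictly between $u$ and $w$ in $\pi$ that is non-adjacent in $G$ to both $u$ and $w$. Enumerating length-two paths in $\overline{G}$ directly could cost up to $\Theta(n\overline{m})$, so instead I would iterate over the $m$ edges of $G$. For each $uw \in E$, with $u$ preceding $w$ in $\pi$, mark $u$, $w$, and every vertex adjacent in $G$ to $u$ or $w$ in an auxiliary Boolean array; sweep the vertices lying strictly between $u$ and $w$ in $\pi$ and report a violation the moment an unmarked one is found; then unmark. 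The per-edge cost is $O(n + d_G(u) + d_G(w))$, and summing yields $O(nm)$ since $\sum_{uw \in E}(d_G(u) + d_G(w)) = \sum_v d_G(v)^2 \le 2m(n-1)$.

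The main obstacle is the reformulation step: I must argue carefully that restricting the scan from ``every length-two path in $\overline{G}$'' to ``every edge of $G$ together with the vertices in its $\pi$-interval'' catches every possible quasi-transitivity failure. Concretely, this amounts to observing that a length-two path $(u,v,w)$ in $\overline{G}$ whose endpoints $u$ and $w$ are also non-adjacent in $G$ poses no threat to quasi-transitivity, since then $uw$ is itself an edge of $\overline{G}$ and can be oriented freely. Once this is settled, only bookkeeping remains: representing $\overline{G}$ suitably as input to~\cite{MS99-DM}, providing $O(1)$ adjacency queries in $G$ via an $n \times n$ bit matrix, and combining the running times — all of which fit within the $O(nm)$ budget whenever $m = \Omega(n)$, with the residual very-sparse case handled trivially.
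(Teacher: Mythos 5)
Your proposal matches the paper's proof: both run the McConnell--Spinrad linear-time algorithm to obtain a linear extension of a candidate orientation $\overline{F}$ of $\overline{G}$, use the fact that an orientation is transitive exactly when it is quasi-transitive and acyclic (acyclicity being free from the linear extension), and verify quasi-transitivity in $O(nm)$ time by observing that any violation consists of an edge $uw$ of $G$ together with a vertex $v$ non-adjacent in $G$ to both endpoints, so only $O(nm)$ configurations need be examined. Your marking-and-sweep implementation and your explicit justification of why paths in $\overline{G}$ with $uw \notin E(G)$ can be ignored merely spell out details the paper states tersely; the approach is the same.
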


\paragraph{Alternately orientable graphs}
An orientation of a graph is \emph{alternating} 
if it is (quasi-)transitive on every chordless cycle of length 
greater than or equal to 4, that is, the directions of the edges alternate. 
A graph is an \emph{alternately orientable graph}~\cite{Hoang87-JCTSB} 
if it has an alternating orientation. 
It is clear from the definition that 
alternately orientable graphs generalize comparability graphs. 
A polynomial-time recognition algorithm is known 
for alternately orientable graphs~\cite{Hoang87-JCTSB}. 

We say a graph is an \emph{alternately orientable cocomparability graph} 
if it is a cocomparability graph that has an alternating orientation. 
Obviously, alternately orientable cocomparability graphs 
can be recognized in polynomial time. 
It is also known that 
every alternately orientable cocomparability graph is a trapezoid graph 
but the converse is not true~\cite{Felsner98-JGT}. 

We say an orientation is an \emph{acyclic alternating orientation} 
if it is alternating and acyclic, and 
we say a graph is \emph{acyclic alternately orientable} 
if it has an acyclic alternating orientation. 
We can see that an alternating orientation is acyclic 
if it contains no directed cycles of length 3. 
Recall that a graph has a quasi-transitive orientation 
if and only if 
it has an orientation that is quasi-transitive and acyclic. 
It was conjectured that 
a similar statement might hold for alternating orientation, 
that is, a graph is alternately orientable if and only if 
it has an acyclic alternating orientation~\cite{Hoang87-JCTSB}. 
Later, however, a counterexample was provided~\cite{Hougardy93-DM}. 
Thus, the class of acyclic alternately orientable graphs 
is a proper subclass of alternately orientable graphs. 
Moreover, the recognition problem is NP-complete 
for acyclic alternately orientable graphs. 
\begin{theorem}\label{theorem:acyclic-alternately-orientable}
The recognition problem is NP-complete 
for acyclic alternately orientable graphs. 
\end{theorem}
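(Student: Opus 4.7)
I would prove Theorem~\ref{theorem:acyclic-alternately-orientable} by establishing membership in NP and then NP-hardness.

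For membership in NP, a certificate is an orientation $F$ of $G$; acyclicity of $F$ is verified by topological sort. The alternating condition can be checked via the following local equivalence: $F$ is alternating if and only if for every induced $P_3$ $(u, v, w)$ of $G$ that lies on some chordless cycle of length at least $4$, the two edges incident to $v$ are oriented coherently (both toward $v$ or both away). For each induced $P_3$ $(u, v, w)$ on which $F$ is not coherent at $v$, it suffices to verify that $u$ and $w$ lie in different connected components of the subgraph $G[\{u, w\} \cup (V(G) \setminus N[v])]$, since a shortest $u$-$w$ path in this subgraph, if one exists, is induced in $G$ and closes up with $v$ to give a chordless cycle of length at least $4$. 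Since there are $O(nm)$ induced $P_3$'s and each connectivity test is polynomial, verification runs in polynomial time.

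For NP-hardness, the plan is to reduce from 3-SAT using variable and clause gadgets. A variable gadget is, at first approximation, a chordless $4$-cycle, which admits exactly two alternating orientations, both acyclic, naturally encoding a Boolean value. A clause gadget is attached to the variable gadgets of its three literals and is designed so that the combined graph admits an acyclic alternating orientation if and only if at least one of the three literals is true. The natural core of the clause gadget is Hougardy's counterexample~\cite{Hougardy93-DM}: a graph that is alternately orientable but has no acyclic alternating orientation. One would modify it by attaching it to the three variable gadgets through edges that create new chordless cycles, whose alternation ties each variable gadget to the orientation of the Hougardy core, and whose net effect is that the clause gadget becomes acyclic alternately orientable precisely when at least one attached literal is set to the satisfying value.

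The main obstacle will be the design and correctness of the clause gadget. Alternation is a parity-type constraint along chordless cycles that propagates orientation choices globally, so attaching gadgets can easily over-constrain or under-constrain the space of alternating orientations; acyclicity is then a further global topological condition layered on top. Proving correctness requires showing both that every satisfying assignment extends to an acyclic alternating orientation of the full reduction graph, and that every acyclic alternating orientation restricts to variable orientations that satisfy every clause, which will likely demand a careful case analysis over the seven satisfying truth-value combinations per clause and the orientations of the Hougardy-like core they are compatible with.
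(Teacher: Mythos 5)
Your NP-membership argument is fine and essentially matches the paper's (the paper simply cites Ho\`ang's polynomial-time test for the alternating property plus a linear-time acyclicity check; your explicit test via induced $P_3$'s and connectivity in $G[\{u,w\}\cup(V(G)\setminus N[v])]$ is a valid way to implement it). The problem is the hardness half: what you give is a plan, not a reduction. The clause gadget --- the entire content of the NP-hardness proof --- is never constructed. You say it should be ``designed so that'' the combined graph is acyclic alternately orientable iff some literal is true, and then you yourself identify the correctness of this unspecified gadget as ``the main obstacle.'' Nothing in the proposal shows that a Hougardy-core-based gadget with the required behaviour exists, that attaching it to three $4$-cycle variable gadgets does not create unintended chordless cycles (alternation constraints propagate along every chordless cycle of length at least $4$, including accidental ones crossing gadget boundaries), or that acyclicity interacts correctly with the seven satisfying patterns per clause. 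As it stands the reduction could fail at any of these points, so the theorem is not proved.

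For comparison, the paper avoids gadget engineering entirely by reducing from the \emph{non-betweenness} problem rather than 3-SAT. It puts a clique on vertices $v_1,\dots,v_{|A|}$ (so an acyclic orientation of that part is exactly a linear order of $A$), and for each triple $(a_i,a_j,a_k)$ attaches two pendant vertices $u_h,w_h$ so that the only chordless cycles of length at least $4$ are $(u_h,w_h,v_i,v_j)$ and $(u_h,w_h,v_k,v_j)$. Alternation on these two $4$-cycles forces the edges $v_iv_j$ and $v_kv_j$ to be oriented either both toward $v_j$ or both away from $v_j$, which is precisely the non-betweenness constraint; acyclicity then makes the orientation of the clique a genuine linear order. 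If you want to salvage a 3-SAT route you would have to actually exhibit and verify a clause gadget, which is exactly the work your proposal defers; switching to non-betweenness makes the constraint encoding essentially one line.
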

\begin{proof}
We can verify in polynomial time whether 
an orientation is alternating~\cite{Hoang87-JCTSB}. 
Since testing acyclicity of an orientation takes linear time, 
the recognition problem is in NP. 
We show a polynomial-time reduction 
from the non-betweenness problem, 
which is known to be NP-complete~\cite{GM06-LNCS}. 
Given a finite set $A$ and a collection $C$ of 
ordered triples $(a_i, a_j, a_k)$ of distinct elements of $A$, 
the non-betweenness problem is to decide 
whether there is a bijection $f: A \to \{1, 2, \ldots, |A|\}$ 
such that for each $(a_i, a_j, a_k) \in C$, 
either $f(a_i), f(a_k) < f(a_j)$ or $f(a_j) < f(a_i), f(a_k)$. 
Let $G$ be a graph such that 
\begin{align*}
V(G) &= \{ v_i \colon\ a_i \in A \} \cup \{ u_i, w_i \colon\ t_i \in C \} \mbox{ and } \\
E(G) &= \{ v_iv_j \colon\ 1 \leq i < j \leq |A| \} \cup \{ u_hw_h, u_hv_j, w_hv_i, w_hv_k \colon\ t_h = (a_i, a_j, a_k) \in C \}.
\end{align*}
Clearly, we can construct the graph $G$ in time polynomial in $n$ and $m$. 
We can also see from the construction that 
the set of chordless cycles 
$\{(u_h, w_h, v_i, v_j), (u_h, w_h, v_k, v_j) \colon\ t_h = (a_i, a_j, a_k) \in C \}$ 
contains all the chordless cycles of length grater than or equal to 4. Thus 
an orientation $F$ of $G$ is alternating if and only if 
for any three vertices $v_i, v_j, v_k$ with 
$t_h = (a_i, a_j, a_k) \in C$, either 
$(v_i, v_j), (v_k, v_j) \in F$ or $(v_j, v_i), (v_j, v_k) \in F$. 
Therefore, $G$ has an acyclic alternating orientation if and only if there is 
a bijection $f: A \to \{1, 2, \ldots, |A|\}$ 
such that for each $(a_i, a_j, a_k) \in C$, 
either $f(a_i), f(a_k) < f(a_j)$ or $f(a_j) < f(a_i), f(a_k)$. 
Thus we have the theorem
\end{proof}

\paragraph{The vertex ordering characterization of simple-triangle graphs}
Now, we show the vertex ordering characterization 
of simple-triangle graphs. 
\begin{theorem}[\cite{Takaoka18-DM}]
\label{theorem:VOC}
A graph $G$ is a simple-triangle graph if and only if 
there is an (acyclic) alternating orientation $F$ of $G$ and 
a transitive orientation $\overline{F}$ of 
the complement $\overline{G}$ of $G$ such that 
$F \cup \overline{F}$ is acyclic. 
Moreover, if $G$ is a simple-triangle graph, 
then for any transitive orientation $\overline{F}$ of $\overline{G}$, 
there is an (acyclic) alternating orientation $F$ of $G$ 
such that $F \cup \overline{F}$ is acyclic. 
An acyclic orientation of a complete graph is equivalent 
to the linear ordering of the vertices of the graph. 
The orientation $F \cup \overline{F}$ is 
called an \emph{apex ordering} of a simple-triangle graph 
since it coincides with the ordering of 
the apices of the triangles in the representation. 
\end{theorem}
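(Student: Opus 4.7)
The plan is to prove the characterization in both directions and then address the moreover clause.

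For the forward direction, I would start from a simple-triangle representation of $G$, let $\pi$ be the ordering of vertices by apex position, and orient each edge of $G$ and each non-edge of $G$ according to $\pi$ to obtain $F$ and $\overline{F}$. The key geometric lemma to establish first is that for $a_u < a_v$, the triangles $T_u$ and $T_v$ are disjoint if and only if $r_u \leq l_v$; this reduces to a linear interpolation along the height parameter, using that at height $y$ the right edge of $T_u$ is at $y a_u + (1-y) r_u$ and the left edge of $T_v$ is at $y a_v + (1-y) l_v$. Transitivity of $\overline{F}$ is then immediate: $r_u \leq l_v \leq r_v \leq l_w$ yields $r_u \leq l_w$. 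Alternation of $F$ needs to be checked only on chordless 4-cycles, since cocomparability graphs have no longer chordless cycles (as noted in the paper); a short case analysis on the apex ordering of the four vertices in such a cycle, using the lemma to rule out inconsistent orderings, completes it. Acyclicity of $F \cup \overline{F}$ is immediate because the orientation extends the linear order $\pi$, so $F$ itself is acyclic.

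For the reverse direction, I would take a linear extension $\pi$ of $F \cup \overline{F}$ and use it to assign apex positions. It remains to choose interval endpoints $l_v, r_v$ on $L_2$ with $l_v \leq r_v$ so that for $u <_\pi v$ the inequality $l_v < r_u$ holds if and only if $uv \in E(G)$. I would encode the required precedences into a directed graph on the $2n$ endpoints and show that it is acyclic, which yields a consistent linear order of endpoints and hence a simple-triangle representation. The main obstacle is proving this acyclicity: any directed cycle of precedences should be traced back to either a transitivity violation in $\overline{F}$ or a non-alternating orientation of a chordless 4-cycle of $G$ under $F$, both ruled out by the hypotheses. The alternating condition on 4-cycles is exactly what prevents the short local conflicts between $l$- and $r$-constraints.

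For the moreover clause, I would apply Gallai's theorem on the modular decomposition of $\overline{G}$: any two transitive orientations of $\overline{G}$ differ by flipping orientations inside a collection of modules. Starting from a pair $(F_0, \overline{F}_0)$ produced by the forward direction for some representation, and given an arbitrary transitive orientation $\overline{F}$ of $\overline{G}$, I would reverse the apex positions inside every module on which $\overline{F}$ and $\overline{F}_0$ disagree. The resulting linear order $\pi'$ yields a new orientation $F'$ of $G$ with $F' \cup \overline{F}$ acyclic, and $F'$ remains alternating because vertices in a module are indistinguishable from outside the module, so the set of chordless 4-cycles and the constraints they impose are preserved under the internal reversal.
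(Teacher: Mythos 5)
First, a point of reference: the paper does not prove Theorem~\ref{theorem:VOC} at all --- it is imported from \cite{Takaoka18-DM} --- so there is no in-paper argument to compare yours against, and your proposal has to stand on its own. Your forward direction does: the separation lemma (for $a_u<a_v$ the triangles are disjoint iff $r_u\le l_v$, by interpolating the bounding edges) gives transitivity of $\overline{F}$ in one line, the case analysis showing that on a chordless 4-cycle the apex of a vertex cannot lie strictly between the apices of its two non-neighbours does close, and acyclicity is free because everything is read off the linear order of apices. The module-flip argument for the ``moreover'' clause is also essentially right, with one imprecision: two transitive orientations of $\overline{G}$ may differ by an \emph{arbitrary} permutation of the children of a complete quotient node, not only by reversals of modules. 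This is repairable (an adjacent transposition of two children is itself a flip of the module formed by their union, and your observation that a chordless 4-cycle meets a module in $0$, $1$, $2$ non-adjacent, or $4$ vertices --- so flips preserve alternation --- is correct), but it needs to be said.

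The genuine gap is in the reverse direction. Your endpoint digraph has arcs $l_v\to r_v$, arcs $r_u\to l_v$ for non-edges with $u<_\pi v$, and strict arcs $l_v\to r_u$ for edges with $u<_\pi v$; its directed cycles alternate between $l$- and $r$-nodes and hence have length $2k$. For $k=2$ such a cycle encodes vertices $u,u',v,v'$ with $(u,v),(u',v')\in F$ and $(u,v'),(u',v)\in\overline{F}$, and the argument of Claim~\ref{fact:diagonal} (transitivity forces $uu',vv'\in E$, yielding a chordless 4-cycle on which $F$ cannot alternate) disposes of it --- this is exactly the ``alternating 4-anticycle'' configuration of Section~\ref{section:correctness}. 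But for $k\ge 3$ the cycle is not a local conflict on any single chordless 4-cycle, and ``should be traced back to'' is carrying the entire weight of the hard direction: one must show that a \emph{shortest} offending cycle of length at least $6$ forces a chord that shortens it or collapses it to the $k=2$ case, by repeatedly interleaving transitivity of $\overline{F}$ with Claim~\ref{fact:diagonal} and the alternation constraint --- precisely the kind of minimal-counterexample induction the paper itself performs for alternating $2k$-cycles in Section~\ref{section:correctness}. Until that induction is written, the reverse direction is asserted rather than proved. A minor related point: the system mixes strict and non-strict inequalities, so the condition you need is not that the digraph is acyclic but that no directed cycle contains a strict arc (a cycle of non-strict arcs merely forces equalities, which you can then perturb away).
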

Theorem~\ref{theorem:VOC} implies that a simple-triangle graph 
is an (acyclic) alternately orientable cocomparability graph, 
but the converse is not known to be true~\cite{Takaoka18-DM}. 
We also note that Theorem~\ref{theorem:VOC} is similar to 
a well-known fact that a graph is a permutation graph if and only if 
it is both a comparability graph and a cocomparability graph~\cite{PLE71-CJM}.

\subsection{The recognition algorithm}
Let $F$ be an alternating orientation of a graph $G$, and 
let $\overline{F}$ be a transitive orientation of the complement $\overline{G}$ of $G$. 
Suppose that $F \cup \overline{F}$ is not acyclic. 
It is well known that an orientation of a complete graph is not acyclic 
if and only if it contains a directed cycle of length 3. 
Each directed cycle of length 3 in $F \cup \overline{F}$ consists of 
either three edges in $F$ or two edges in $F$ with one edge in $\overline{F}$ 
since $\overline{F}$ is a transitive orientation. 
We will refer to a directed cycle $(a, b, c)$ of length 3 in $F \cup \overline{F}$ 
as a \emph{$\Delta$-obstruction} 
if $(a, b), (b, c) \in F$ and $(c, a) \in \overline{F}$. 

It is clear from Theorem~\ref{theorem:VOC} that 
if a graph $G$ is a simple-triangle graph, 
then there is a pair of 
an acyclic alternating orientation $F$ of $G$ 
and a transitive orientation $\overline{F}$ of $\overline{G}$ 
such that $F \cup \overline{F}$ contains no $\Delta$-obstructions. 
Conversely, a graph $G$ is a simple-triangle graph 
if for some transitive orientation $\overline{F}$ of $\overline{G}$, 
there is an acyclic alternating orientation $F$ of $G$ 
such that $F \cup \overline{F}$ contains no $\Delta$-obstructions. 
We can also have the following. 
\begin{theorem}\label{theorem:correctness}
Let $\overline{F}$ be a transitive orientation of $\overline{G}$. 
If $G$ has a (not necessarily acyclic) alternating orientation $F$ 
such that $F \cup \overline{F}$ contains no $\Delta$-obstructions, 
then $G$ also has an acyclic alternating orientation $F'$ of $G$ 
such that $F' \cup \overline{F}$ contains no $\Delta$-obstructions, 
that is, $G$ is a simple-triangle graph. 
\end{theorem}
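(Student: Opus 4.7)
The plan is to reformulate the conclusion in terms of linear extensions. An acyclic orientation of the complete graph on $V(G)$ is the same as a linear ordering of $V(G)$, so finding an $F'$ with $F' \cup \overline{F}$ acyclic is equivalent to finding a linear extension of $\overline{F}$ whose induced orientation on $E(G)$ is alternating. Under the hypotheses, one first pins down the only way $F \cup \overline{F}$ can fail to be acyclic: since an orientation of a complete graph is acyclic iff it has no directed $3$-cycle, and since $\overline{F}$ is transitive (ruling out $3$-cycles with at most one $F$-edge) while $\Delta$-obstructions (the $3$-cycles with exactly two $F$-edges) are excluded by assumption, every directed $3$-cycle in $F \cup \overline{F}$ is a directed triangle lying entirely in $F$. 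Thus the task reduces to modifying $F$ into an alternating, $\Delta$-obstruction-free orientation with no directed triangle.

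I would then proceed by induction on the number of directed triangles of $F$, the base case being immediate. For the inductive step, pick a directed triangle $T=(a,b,c)$ of $F$ and form $F^{*}$ from $F$ by reversing the three edges $ab$, $bc$, $ca$, leaving every other edge alone. The three things to verify are that $F^{*}$ is alternating, that $F^{*} \cup \overline{F}$ still contains no $\Delta$-obstruction, and that a suitable potential (for instance, the number of directed triangles in $F$) strictly decreases. Iteration then produces an acyclic alternating $F'$ with $F' \cup \overline{F}$ free of $\Delta$-obstructions, and the reformulation at the beginning of the argument delivers the desired linear extension.

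The main obstacle is preservation of the alternating property. I expect the technical crux to be a structural lemma asserting that no edge of a directed triangle of an alternating orientation can lie on a chordless cycle of $G$ of length at least $4$; otherwise, propagating the alternating constraint around such a cycle together with the adjacencies forced by the third triangle vertex $c$ would either produce a chord of the cycle (contradicting chordlessness) or force a configuration incompatible with the $\Delta$-obstruction-free hypothesis on $\overline{F}$. Granted this isolation lemma, reversing $T$ touches no chordless cycle of length $\geq 4$, so $F^{*}$ is automatically alternating; the absence of new $\Delta$-obstructions and the decrease of the potential then reduce to bounded case analyses on the three reoriented edges and the triangles they can participate in. Verifying this isolation lemma, and confirming that the triangle-reversal does not trade one directed triangle for several new ones, is where I expect the proof to demand the most care.
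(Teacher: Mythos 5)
There is a genuine gap, and it sits exactly where you predicted the crux would be: the ``isolation lemma'' you propose is false, and in fact the paper proves essentially its opposite. Under your hypotheses (alternating $F$, transitive $\overline{F}$, no $\Delta$-obstructions), every directed triangle of $F$ whose edges are genuinely constrained extends to an octahedron-like configuration: the paper's Claim~\ref{fact:C3-AC6} shows that a directed $3$-cycle $(a,b,c)$ in $F$ forces an \emph{alternating $6$-cycle} on six vertices (Figure~\ref{figs}), i.e.\ a copy of $K_{2,2,2}$ in which each edge of the triangle lies on a chordless $4$-cycle. Concretely, take $G=K_{2,2,2}$ with the orientation of Figure~\ref{figs}: $\overline{G}=3K_2$ is transitively orientable, $F$ is alternating on all three chordless $4$-cycles, there are no $\Delta$-obstructions, yet $(a_0,a_1,a_2)$ is a directed triangle and its edge $a_0a_1$ lies on the chordless $4$-cycle $(a_0,a_1,b_2,b_0)$. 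Reversing just the three edges of that triangle turns $a_1$ into a vertex with both an in-edge and an out-edge on that $4$-cycle, destroying the alternating property. So the single-triangle reversal step fails in the very first nontrivial example, and there is no obvious local repair: any correct reversal must flip chordless $4$-cycles as whole units.

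This is precisely what drives the paper's different design: instead of reversing one triangle, it fixes a vertex $v$ and simultaneously reverses the set $F_v$ of \emph{all} edges $(w,u)$ closing a directed triangle $(u,v,w)$ through $v$. It then proves that if one edge of a chordless $4$-cycle lies in $F_v$ then all four do (so alternation survives), that no new $\Delta$-obstructions or alternating $6$-cycles are created, and that no alternating $6$-cycle through $v$ remains; acyclicity of the final orientation is certified via the equivalence ``$F$ acyclic $\iff$ no alternating $6$-cycles in $F\cup\overline{F}$'' rather than by counting directed triangles. Your reduction of the problem to eliminating directed triangles while preserving alternation and $\Delta$-obstruction-freeness is the right framing and matches the paper's, but the mechanism and the progress measure both need to be replaced: a raw count of directed triangles is also not obviously monotone under reversal, whereas the paper's per-vertex processing sidesteps that issue.
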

Our recognition algorithm is due to this structural characterization, 
and we will prove the theorem when we show the correctness of the algorithm. 

Figure~\ref{algorithm} shows our algorithm to recognize simple-triangle graphs. 
The algorithm finds an apex ordering of the given graph 
if it is a simple-triangle graph 
or report that the graph is not a simple-triangle graph. 
\begin{figure*}[ht]
\centering
\fbox{
\begin{tabular}{lp{38em}}
  \textbf{Input:}&
  A graph $G$. 
  \\
  \textbf{Output:}&
  An apex ordering of $G$ if $G$ is a simple-triangle graph or 
  \\ &
  report that $G$ is not a simple-triangle graph. 
  \\
  \textbf{Step~1:}&
  Compute a transitive orientation $\overline{F}$ of the complement $\overline{G}$ of $G$. 
  \\ &
  If $G$ has no transitive orientations, 
  then report that $G$ is not a simple-triangle graph. 
  \\
  \textbf{Step~2:}&
  Compute a partial orientation $F$ of $G$ that satisfies the following three conditions: 
  \begin{enumerate}
  \item $F \cup \overline{F}$ contains no $\Delta$-obstructions, 
  \item The edges on every chordless cycle of length 4 are oriented so that 
  the directions of the edges alternate 
  (Notice that the length of every chordless cycle in $G$ is at most 4 
  since $G$ is a cocomparability graph), and 
  \item Every edge remains undirected if it is not on a chordless cycle of length 4. 
  \end{enumerate}
  If $G$ has no such orientations, 
  then report that $G$ is not a simple-triangle graph. 
  \\
  \textbf{Step~3:}&
  Choose a vertex $v$ of $G$. 
  Let $F_v$ be the set of all the  edges $(w, u) \in F$ 
  such that $(u, v, w)$ form a directed cycle in $F$. 
  Reverse the orientation of all the edges in $F_v$ 
  to compute another partial orientation $F'$ of $G$, that is, 
  $F' = (F - F_v) \cup F_v^{-1}$. 
  \\ &
  Repeat this procedure for all the vertices of $G$. 
  We denote the resultant orientation by $F''$. 
  \\ 
  \textbf{Step~4:}&
  Output a linear extension of $F'' \cup \overline{F}$. 
\end{tabular}
}
  \caption{The recognition algorithm for simple-triangle graphs}
  \label{algorithm}
\end{figure*}

We remark that in \textbf{Step~2} of the algorithm, 
we keep an edge undirected if it is not on a chordless cycle of length 4, 
since it is needed to do \textbf{Step~3} and \textbf{Step~4} correctly. 
If we do not need to find an apex ordering of the given graph, 
in \textbf{Step~2} 
we only have to compute an alternating orientation $F$ of $G$ such that 
$F \cup \overline{F}$ contains no $\Delta$-obstructions. 

We will prove the correctness of the algorithm in the next section. 
Notice that in order to prove the correctness, 
it suffices to show that 
when \textbf{Step~3} is finished, $F'' \cup \overline{F}$ is acyclic.

\paragraph{Details of the algorithm}
In the rest of this section, 
we show that the algorithm runs in $O(nm)$ time. 
We assume without loss of generality that 
the given graph is connected and $n \leq m$, 
since otherwise we apply the algorithm to each connected component. 

In \textbf{Step~1}, 
we use the linear-time algorithm of~\cite{MS99-DM}, which produces 
a linear extension of a transitive orientation $\overline{F}$ of 
the complement $\overline{G}$ 
if the given graph $G$ is a cocomparability graph. 
As shown in Theorem~\ref{theorem:cocomparability}, 
we can verify transitivity of $\overline{F}$ in $O(nm)$ time, 
and hence \textbf{Step~1} can be performed in the same time bound. 

In \textbf{Step~2}, we construct the auxiliary graph $G^+$ 
of $G$ that is bipartite if and only if $G$ is alternately orientable. 
Then, we make a 2CNF formula $\phi$, 
an instance of the 2-satisfiability problem, from $G^+$. 
The partial orientation $F$ of $G$ can be obtained from 
a truth assignment $\tau$ of $\phi$. 
When $G^+$ is not bipartite or $\phi$ cannot be satisfied, 
the algorithm reports that $G$ is not a simple-triangle graph. 

The vertices of the auxiliary graph $G^+$ are 
all the ordered pairs $(u, v)$ with $uv \in E(G)$. 
Each vertex $(u, v)$ of $G^+$ is adjacent to the vertex $(v, u)$. 
Each vertex $(u, v)$ of $G^+$ is also adjacent to 
every vertex $(v, w)$ such that in $G$, 
the vertices $(u, v, w)$ form a path of three vertices on a chordless cycle of length 4. 
We can see that $G$ is alternately orientable if and only if 
$G^+$ is bipartite, and an alternating orientation of $G$ 
can be obtained from one color class of $G^+$. 
The number of vertices of $G^+$ is $2m$, 
and the number of edges of $G^+$ is at most $m+2nm$ 
since the number of paths of three vertices in $G$ is at most $nm$. 
Hence we can test the bipartiteness of $G^+$ in $O(nm)$ time. 

Now, we show that the auxiliary graph $G^+$ can be constructed from $G$ in $O(nm)$ time. 
The \emph{neighborhood} of a vertex $v$ of $G$ is the set $N(v) = \{u \in V(G) \colon\ uv \in E(G)\}$. 
We define that 
the \emph{upper set} of $v$ is the set $U(v) = \{u \in V(G) \colon\ (v, u) \in \overline{F}\}$ and 
the \emph{lower set} of $v$ is the set $L(v) = \{u \in V(G) \colon\ (u, v) \in \overline{F}\}$. 
Let $uv$ be an edge of $G$. 
If $N(u) \cap U(v) \neq \emptyset$ and $N(v) \cap U(u) \neq \emptyset$, then 
for any two vertices $w \in N(u) \cap U(v)$ and $z \in N(v) \cap U(u)$, we have $wz \in E(G)$, 
that is, $(u, v, z, w)$ is a chordless cycle of length 4; 
otherwise either $(w, z) \in \overline{F}$ or $(z, w) \in \overline{F}$, but 
if $(w, z) \in \overline{F}$ then 
$(v, w) \in \overline{F}$ implies $(v, z) \in \overline{F}$, contradicting $z \in N(v)$, and 
if $(z, w) \in \overline{F}$ then 
$(u, z) \in \overline{F}$ implies $(u, w) \in \overline{F}$, contradicting $w \in N(u)$. 
Similarly, 
for any two vertices $w \in N(u) \cap L(v)$ and $z \in N(v) \cap L(u)$, 
we have that $(u, v, z, w)$ is a chordless cycle of length 4. 
Conversely, each chordless cycle of length 4 has two edges 
$uv$ and $zw$ such that 
$N(u) \cap U(v) \neq \emptyset$, $N(v) \cap U(u) \neq \emptyset$, 
$N(z) \cap L(w) \neq \emptyset$, and $N(w) \cap L(z) \neq \emptyset$. 
\par
Therefore, we can construct the auxiliary graph $G^+$ by the following method. 
(1) For each edge $uv \in E(G)$, test whether 
$N(u) \cap U(v) \neq \emptyset$ and $N(v) \cap U(u) \neq \emptyset$ 
[resp. $N(u) \cap L(v) \neq \emptyset$ and $N(v) \cap L(u) \neq \emptyset$]. 
(2) If it is, then for any two vertices 
$w \in N(u) \cap U(v)$ and $z \in N(v) \cap U(u)$ 
[resp. $w \in N(u) \cap L(v)$ and $z \in N(v) \cap L(u)$], 
add to $G^+$ 
the edge joining $(u, v)$ and $(v, z)$, 
the edge joining $(z, v)$ and $(v, u)$, 
the edge joining $(v, u)$ and $(u, w)$, and 
the edge joining $(w, u)$ and $(u, v)$. 
(3) Finally, for each edge $uv \in E(G)$, 
add to $G^+$ the edge joining $(u, v)$ and $(v, u)$. 
The first procedure takes $O(n)$ time for each edge, and 
the second procedure takes $O(nm)$ time in total 
since the number of paths of three vertices in $G$ is at most $nm$. 
The third procedure takes $O(m)$ time in total, and hence 
the auxiliary graph $G^+$ can be constructed in $O(nm)$ time. 

We next construct the 2CNF formula $\phi$ from $G^+$. 
Recall that each vertex $(u, v)$ of $G^+$ is adjacent to $(v, u)$. 
A connected component of $G^+$ consists of only two vertices 
$(u, v)$ and $(v, u)$ if and only if the edge $uv$ of $G$ is not 
on a chordless cycle of length 4. 
We remove such components from $G^+$, and 
let $c_1, c_2, \ldots, c_k$ be the remaining components. 
To each component $c_i$, we assign the Boolean variable $x_i$. 
Since we assume that $G^+$ is bipartite, 
the vertices of each component can be partitioned 
into two color classes. 
We assign the literal $x_i$ to the vertices of one color class of $c_i$, and 
we assign the literal $\overline{x_i}$ (the negation of $x_i$) 
to the vertices of the other color class of $c_i$. 
\par
Notice that a vertex $(u, v)$ of $G^+$ has its literal 
if and only if the edge $uv$ of $G$ is on a cycle of length 4. 
Let $l_{(u, v)}$ be the literal assigned to a vertex $(u, v)$ of $G^+$. 
The 2CNF formula $\phi$ consists of all the clauses 
$(l_{(u, v)} \vee l_{(v, w)})$ such that 
the vertices $(u, v, w)$ in $G$ form 
a path of three vertices with $(w, u) \in \overline{F}$. 
When no literal is assigned to $(u, v)$ or $(v, w)$, 
the 2CNF formula $\phi$ does not contain the clause. 
\par
Let $\tau$ be a truth assignment of the variables in $\phi$. 
Notice that $l_{(u, v)} = 0$ in $\tau$ if and only if $l_{(v, u)} = 1$ 
since each vertex $(u, v)$ of $G^+$ is adjacent to $(v, u)$. 
We obtain the partial orientation $F$ of $G$ from $\tau$ 
by orienting each edge $uv \in E(G)$ as $(u, v) \in F$ 
if $l_{(u, v)} = 0$ in $\tau$. 
It is obvious from the construction of $\phi$ that 
a truth assignment $\tau$ satisfies $\phi$ if and only if 
$F \cup \overline{F}$ contains no $\Delta$-obstructions. 
We can also see that 
the edges on every chordless cycle of length 4 is alternately oriented 
and the other edges remain undirected. 
\par
The 2CNF formula $\phi$ has at most $2m$ Boolean variables 
and at most $nm$ clauses. 
We can also see that $\phi$ can be constructed in $O(nm)$ time, 
since all the paths of three vertices can be found in $O(nm)$ time. 
Since a satisfying truth assignment of a 2CNF formula $\phi$ can be computed 
in time linear to the size of $\phi$ (see~\cite{APT79-IPL} for example), 
\textbf{Step~2} can be performed in $O(nm)$ time. 

It is easy to see that \textbf{Step~3} and \textbf{Step~4} 
can be performed in $O(nm)$ time and $O(n^2)$ time, respectively. 
Now, we have the following. 
\begin{theorem}
Simple-triangle graphs can be recognized in $O(nm)$ time, and 
an apex ordering of a simple-triangle graph 
can be computed in the same time bound. 
\end{theorem}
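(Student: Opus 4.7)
The plan is to combine the per-step running-time analyses already carried out in the Details of the algorithm paragraph with the correctness statement deferred to Section~\ref{section:correctness}.

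For the running time, I would tally the four steps. Step~1 is $O(nm)$ by Theorem~\ref{theorem:cocomparability}. Step~2 is $O(nm)$ since the auxiliary graph $G^+$ and the 2CNF formula $\phi$ each have size $O(nm)$ and can be constructed in the same time, and 2-SAT is solvable in linear time. Step~3 is $O(nm)$, and Step~4, which linearly extends the tournament $F''\cup\overline{F}$ on $n$ vertices, is $O(n^2)$. Under the harmless assumption that $G$ is connected and $n\le m$, this sums to $O(nm)$.

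For correctness, I would argue in both directions using Theorem~\ref{theorem:VOC} together with the still-to-be-proved Theorem~\ref{theorem:correctness}. If the algorithm outputs a linear ordering, then by Theorem~\ref{theorem:correctness} the orientation $F''$ produced by Step~3 is alternating and $F''\cup\overline{F}$ has no $\Delta$-obstructions, hence is acyclic; the linear extension in Step~4 is therefore an apex ordering and $G$ is a simple-triangle graph by Theorem~\ref{theorem:VOC}. Conversely, if $G$ is a simple-triangle graph, the moreover clause of Theorem~\ref{theorem:VOC} guarantees that the transitive orientation $\overline{F}$ produced in Step~1 can be completed to an acyclic alternating $F$ with $F\cup\overline{F}$ acyclic; restricting $F$ to edges on chordless 4-cycles yields a feasible answer for Step~2, so the algorithm cannot abort before reaching Step~4.

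The genuinely hard part of the whole argument is Theorem~\ref{theorem:correctness}, not the present theorem: one must verify that Step~3 simultaneously eliminates every directed triangle in $F$, preserves alternation on chordless 4-cycles, and introduces no new $\Delta$-obstruction. I expect this to rely on the transitivity of $\overline{F}$ together with a local argument that the edges reversed in $F_v$ cannot lie on any chordless 4-cycle of $G$. Once that statement is in hand, the present theorem follows immediately from the running-time tally above.
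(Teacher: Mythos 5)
Your proof is correct and follows essentially the same route as the paper: the theorem is just the sum of the per-step $O(nm)$ bounds (with $n\le m$ for connected $G$) together with the deferred correctness of \textbf{Step~3}, and your two-directional argument via Theorem~\ref{theorem:VOC} and Theorem~\ref{theorem:correctness} is exactly what the paper intends. One small correction to your closing speculation about the deferred part: the point is not that the reversed edges of $F_v$ avoid chordless 4-cycles (on the contrary, every directed edge of $F$ lies on one by construction in \textbf{Step~2}), but that $F_v$ always contains the whole edge set of any chordless 4-cycle it meets, which is how alternation survives the reversal.
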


From \textbf{Step~1} and \textbf{Step~2} of the algorithm, 
we also have the following as a byproduct. 
\begin{theorem}\label{theorem:alternately-orientable-cocomparability}
Alternately orientable cocomparability graphs can be recognized in $O(nm)$ time, and 
alternating orientation of a cocomparability graph 
can be computed in the same time bound. 
\end{theorem}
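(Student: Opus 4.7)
The plan is to extract the proof directly from the analyses of Step~1 and Step~2 in Figure~\ref{algorithm}. First, I would run Step~1 to obtain either a transitive orientation $\overline{F}$ of $\overline{G}$ or a certificate that $G$ is not a cocomparability graph; by Theorem~\ref{theorem:cocomparability} this runs in $O(nm)$ time. When $G$ is a cocomparability graph, $\overline{F}$ is in hand together with the structural fact that every chordless cycle of $G$ has length at most~$4$, so an orientation of $G$ is alternating iff the edges of each chordless $4$-cycle alternate.

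Next, I would construct the auxiliary graph $G^+$ exactly as in Step~2: its vertices are the ordered pairs $(u,v)$ with $uv \in E(G)$, and its edges join $(u,v)$ to $(v,u)$ and to every $(v,w)$ such that $(u,v,w)$ is a length-$3$ path lying on a chordless $4$-cycle of $G$. The essential observation, already verified in the text, is that $G^+$ is bipartite iff $G$ has an alternating orientation, and a $2$-coloring of $G^+$ yields such an orientation by orienting each edge $uv \in E(G)$ according to the color assigned to $(u,v)$ in one fixed class. Using the upper-set/lower-set characterization of chordless $4$-cycles in terms of $\overline{F}$, the relevant length-$3$ paths can be enumerated in $O(nm)$ total time, so $G^+$ has $2m$ vertices, $O(nm)$ edges, and can be built in $O(nm)$ time.

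Finally, I would test bipartiteness of $G^+$ by a standard BFS in time $O(|V(G^+)| + |E(G^+)|) = O(nm)$; if $G^+$ is not bipartite, report that $G$ is not alternately orientable, and otherwise read off an alternating orientation from the two color classes. The whole pipeline therefore runs in $O(nm)$ time, which is exactly the statement of the theorem.

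The main obstacle in this argument is the $O(nm)$ bound on the size and construction time of $G^+$: a naive approach that enumerates candidate length-$3$ paths $(u,v,w)$ and then checks each one for membership in a chordless $4$-cycle would blow up the running time. The point is to use $\overline{F}$ to recognize which neighbors lie ``above'' and ``below'' each vertex, so that the chordless $4$-cycles through a fixed edge $uv$ can be read off directly from the intersections $N(u) \cap U(v)$, $N(v) \cap U(u)$ (and their lower analogues) without ever verifying chordlessness explicitly; the total work is then dominated by the number of length-$3$ paths of $G$, which is $O(nm)$.
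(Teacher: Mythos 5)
Your proposal is correct and follows the paper's own argument essentially verbatim: the paper derives this theorem as a byproduct of the analyses of Step~1 (Theorem~\ref{theorem:cocomparability}) and Step~2 (the auxiliary graph $G^+$, its bipartiteness test, and the $O(nm)$ construction via the upper/lower sets determined by $\overline{F}$), which is exactly the pipeline you describe. You also correctly identify the one nontrivial point, namely that $\overline{F}$ lets one enumerate the chordless $4$-cycles through each edge in $O(nm)$ total time.
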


\section{Correctness of the algorithm}\label{section:correctness}
In this section, we prove correctness of our algorithm. 
In order to prove the correctness, it suffices to show that 
when \textbf{Step~3} is finished, $F'' \cup \overline{F}$ is acyclic, 
where $F''$ is the resultant orientation of \textbf{Step~3}. 

Recall that the graph $G$ is an alternately orientable graph and 
the complement $\overline{G}$ of $G$ is a comparability graph 
with a transitive orientation $\overline{F}$. 
Since $G$ is a cocomparability graph, 
the length of every chordless cycle in $G$ is at most 4. 
Recall also that $G$ has a partial orientation $F$ that 
satisfies the following three conditions: 
(1) $F \cup \overline{F}$ contains no $\Delta$-obstructions, 
(2) the edges on every chordless cycle of length 4 are oriented so that 
the directions of the edges alternate, and 
(3) the other edges remain undirected. 
For a vertex $v$ of $G$, 
let $F_v$ be the set of all the edges $(w, u) \in F$ 
such that $(u, v, w)$ is a directed cycle in $F$, and 
let $F'$ be the partial orientation of $G$ obtained from $F$ 
by reversing the orientation of all the edges in $F_v$, that is, 
$F' = (F - F_v) \cup F_v^{-1}$. 

The outline of the proof is as follows. 
An \emph{alternating $2k$-cycle} in $F \cup \overline{F}$ 
with $k \geq 2$ is a directed cycle 
$(a_0, b_0, a_1, b_1, \ldots, a_{k-1}, b_{k-1})$ of length $2k$ 
with $(a_i, b_i) \in F$ and $(b_i, a_{i+1}) \in \overline{F}$ 
for any $i = 0, 1, \ldots, k-1$ (indices are modulo $k$). 
See Figure~\ref{figs} for example. 
We first show that
\begin{itemize}
\item $F$ is acyclic if and only if 
$F \cup \overline{F}$ contains no alternating 6-cycles.
\end{itemize}
We next show the following: 
\begin{itemize}
\item $F' \cup \overline{F}$ contains no alternating 6-cycles having the vertex $v$, 
\item the reversing the direction of the edges in $F_v$ generates no alternating 6-cycles, 
\item the directions of the edges still alternate in $F'$ 
on every chordless cycle of length 4, and 
\item $F' \cup \overline{F}$ is still contains no $\Delta$-obstructions. 
\end{itemize}
Thus continuing in this way for each vertex, 
we obtain the acyclic orientation $F''$ such that 
$F'' \cup \overline{F}$ contains no $\Delta$-obstructions, 
that is, $F'' \cup \overline{F}$ is acyclic. 

\begin{figure*}[t]
\centering
\begin{tikzpicture}[shorten >=3pt,shorten <=3pt,->]
\tikzstyle{every node}=[draw,circle,fill=white,minimum size=5pt,
                        inner sep=0pt]
\node [label=right:$a_2$] (a2) at (0:2.5) {};
\node [label=above:$b_1$] (b1) at (60:2.5) {};
\node [label=above:$a_1$] (a1) at (120:2.5) {};
\node [label=left:$b_0$] (b0) at (180:2.5) {};
\node [label=below:$a_0$] (a0) at (240:2.5) {};
\node [label=below:$b_2$] (b2) at (300:2.5) {};
\draw [ultra thick,color=gray] (a0) -- (a1);
\draw [ultra thick,color=gray] (a1) -- (a2);
\draw [ultra thick,color=gray] (a2) -- (a0);
\draw [ultra thick,color=gray] (b0) -- (b1);
\draw [ultra thick,color=gray] (b1) -- (b2);
\draw [ultra thick,color=gray] (b2) -- (b0);
\draw [ultra thick,color=gray] (b0) -- (a2);
\draw [ultra thick,color=gray] (b1) -- (a0);
\draw [ultra thick,color=gray] (b2) -- (a1);
\draw [ultra thick] (a0) -- (b0);
\draw [ultra thick] (a1) -- (b1);
\draw [ultra thick] (a2) -- (b2);
\draw [ultra thick,dashed] (b0) -- (a1);
\draw [ultra thick,dashed] (b1) -- (a2);
\draw [ultra thick,dashed] (b2) -- (a0);
\end{tikzpicture}
  \caption{
    An alternating 6-cycles. 
    An arrow $a \to b$ denotes edge $(a, b) \in F$, and 
    a dashed arrow $a \dasharrow b$ denotes edge $(a, b) \in \overline{F}$. 
    }
  \label{figs}
\end{figure*}

\subsection{Definitions and facts}
We begin to state some definitions and facts.
\begin{fact}\label{fact:diagonal}
Let $a_0, b_0, a_1, b_1$ be four vertices of $G$. 
If $a_0b_0, a_1b_1 \in E(G)$ and $(a_0, b_1), (a_1, b_0) \in \overline{F}$, 
then $a_0a_1, b_0b_1 \in E(G)$, that is, 
$(a_0, a_1, b_1, b_0)$ is a chordless cycle of length 4. 
\end{fact}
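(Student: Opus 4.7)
The plan is to prove both conclusions by contradiction, exploiting the transitivity of $\overline{F}$ together with the two hypothesized edges $(a_0,b_1),(a_1,b_0)\in\overline{F}$. The whole argument is essentially two short transitivity cases, and the chordlessness of the resulting $4$-cycle will come for free from the very hypotheses that place its two diagonals in $\overline{F}\subseteq E(\overline{G})$.

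First I would show $a_0a_1\in E(G)$. Suppose to the contrary that $a_0a_1\notin E(G)$; then $a_0a_1\in E(\overline{G})$ and $\overline{F}$ orients this pair. If $(a_0,a_1)\in\overline{F}$, then together with $(a_1,b_0)\in\overline{F}$ transitivity of $\overline{F}$ forces $(a_0,b_0)\in\overline{F}$, contradicting the assumption $a_0b_0\in E(G)$. If instead $(a_1,a_0)\in\overline{F}$, then together with $(a_0,b_1)\in\overline{F}$ transitivity forces $(a_1,b_1)\in\overline{F}$, contradicting $a_1b_1\in E(G)$. Either way we reach a contradiction, so $a_0a_1\in E(G)$.

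Next I would prove $b_0b_1\in E(G)$ by an entirely symmetric case split: if $b_0b_1\notin E(G)$ then $\overline{F}$ orients $b_0b_1$; the orientation $(b_0,b_1)\in\overline{F}$ combines with $(a_1,b_0)\in\overline{F}$ via transitivity to yield $(a_1,b_1)\in\overline{F}$, a contradiction, while $(b_1,b_0)\in\overline{F}$ combines with $(a_0,b_1)\in\overline{F}$ to yield $(a_0,b_0)\in\overline{F}$, again a contradiction.

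Finally, to conclude that $(a_0,a_1,b_1,b_0)$ is a chordless $4$-cycle, I would just observe that its four consecutive edges $a_0a_1$, $a_1b_1$, $b_1b_0$, $b_0a_0$ are all in $E(G)$ (the first and third were just proved; the second and fourth are hypotheses), while the only possible chords are $a_0b_1$ and $a_1b_0$, which by hypothesis lie in $\overline{F}\subseteq E(\overline{G})$ and hence are not edges of $G$. I do not anticipate any real obstacle: the whole fact is an immediate consequence of the transitivity of $\overline{F}$ applied twice, and the only thing to be careful about is correctly pairing each direction of the hypothesized non-edge with the hypothesized edge that triggers the contradiction.
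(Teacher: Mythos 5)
Your proof is correct and follows essentially the same argument as the paper's: both establish $a_0a_1\in E(G)$ and $b_0b_1\in E(G)$ by the same four transitivity case splits on the possible orientations of the missing edge in $\overline{F}$, each leading to a contradiction with $a_0b_0\in E(G)$ or $a_1b_1\in E(G)$. Your additional remark that the diagonals are non-edges by hypothesis, so the cycle is chordless, is the same (implicit) observation the paper relies on.
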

\begin{proof}
Recall that $\overline{F}$ is a transitive orientation. 
If $(a_0, a_1) \in \overline{F}$, then 
$(a_1, b_0) \in \overline{F}$ implies $(a_0, b_0) \in \overline{F}$, 
contradicting $a_0b_0 \in E(G)$. 
If $(a_1, a_0) \in \overline{F}$, then 
$(a_0, b_1) \in \overline{F}$ implies $(a_1, b_1) \in \overline{F}$, 
contradicting $a_1b_1 \in E(G)$. 
Thus $a_0a_1 \in E(G)$. 
Similarly, if $(b_0, b_1) \in \overline{F}$, then 
$(a_1, b_0) \in \overline{F}$ implies $(a_1, b_1) \in \overline{F}$, 
contradicting $a_1b_1 \in E(G)$. 
If $(b_1, b_0) \in \overline{F}$, then 
$(a_0, b_1) \in \overline{F}$ implies $(a_0, b_0) \in \overline{F}$, 
contradicting $a_0b_0 \in E(G)$. 
Thus $b_0b_1 \in E(G)$. 
\end{proof}

Suppose that $F \cup \overline{F}$ contains an alternating 4-cycle 
$(a_0, b_0, a_1, b_1)$ with $(a_0, b_0), (a_1, b_1) \in F$ and 
$(b_0, a_1), (b_1, a_0) \in \overline{F}$. 
We have from Claim~\ref{fact:diagonal} that 
$(a_0, a_1, b_1, b_0)$ is a chordless cycle of length 4. 
The directions of the edges on this cycle must alternate in $F$, 
but $(a_0, b_0), (a_1, b_1) \in F$, a contradiction. 
Thus $F \cup \overline{F}$ contains no alternating 4-cycles. 
An \emph{alternating 4-anticycle} of $F \cup \overline{F}$ is 
a subgraph consisting of four vertices 
$a_0, b_0, a_1, b_1$ with $(a_0, b_0), (a_1, b_1) \in F$ and 
$(a_0, b_1), (a_1, b_0) \in \overline{F}$. 
We have from Claim~\ref{fact:diagonal} that 
$(a_0, a_1, b_1, b_0)$ is a chordless cycle of length 4. 
The directions of the edges on this cycle must alternate in $F$, 
but $(a_0, b_0), (a_1, b_1) \in F$, a contradiction. 
Thus $F \cup \overline{F}$ contains no alternating 4-anticycles. 

The following claim states the direction of 
chords of an alternating 6-cycle. 
\begin{fact}\label{fact:AC6}
If $F \cup \overline{F}$ contains 
an alternating 6-cycle $(a_0, b_0, a_1, b_1, a_2, b_2)$ 
with $(a_i, b_i) \in F$ and $(b_i, a_{i+1}) \in \overline{F}$ 
for any $i = 0, 1, 2$ (indices are modulo 3), 
then 
$(b_2, a_1), (b_2, b_0), (a_0, a_1), 
 (b_0, a_2), (b_0, b_1), (a_1, a_2), 
 (b_1, a_0), (b_1, b_2), (a_2, a_0) \in F$. 
See Figure~\ref{figs}. 
\end{fact}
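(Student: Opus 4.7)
The plan is to show that the six vertices $\{a_0, a_1, a_2, b_0, b_1, b_2\}$ induce the octahedron $K_{2,2,2}$ in $G$, whose three ``antipodal'' non-edges $a_0 b_2, a_1 b_0, a_2 b_1$ are precisely the three $\overline{F}$-edges of the hypothesized alternating 6-cycle. Once all twelve edges of this induced octahedron are in place, three chordless 4-cycles inside it carry the known $F$-orientations $(a_i, b_i)$ around by alternation, and the three resulting propagations read off all nine chord orientations at once.

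I would first establish the three ``diagonal'' chords $a_0 b_1, a_1 b_2, a_2 b_0 \in E(G)$. By the cyclic symmetry of the hypothesis it suffices to treat $a_1 b_2$. If $(b_2, a_1) \in \overline{F}$, then $(a_2, b_2), (a_1, b_1) \in F$ together with $(b_2, a_1), (b_1, a_2) \in \overline{F}$ is an alternating 4-cycle, contradicting the observation (derived just before the claim from Claim~\ref{fact:diagonal}) that $F \cup \overline{F}$ contains none. If instead $(a_1, b_2) \in \overline{F}$, two applications of transitivity in $\overline{F}$---first with $(b_2, a_0)$ and then with $(b_0, a_1)$---yield $(b_0, a_0) \in \overline{F}$, contradicting $a_0 b_0 \in E(G)$.

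Next I would verify that each of the six triangle chords $a_i a_j$ and $b_i b_j$ is in $E(G)$. For $a_0 a_1$, which is representative: $(a_1, a_0) \in \overline{F}$ transits with $(b_0, a_1)$ to give $(b_0, a_0) \in \overline{F}$, contradicting $a_0 b_0 \in E(G)$; and $(a_0, a_1) \in \overline{F}$ transits with $(b_2, a_0)$ to give $(b_2, a_1) \in \overline{F}$, contradicting the diagonal edge $a_1 b_2$ just established. The argument for $b_0 b_1$ is the mirror image using $a_1 b_1$ and the diagonal $a_2 b_0$, and cyclic symmetry finishes the remaining four chords.

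With all twelve edges in place, consider the three 4-cycles $C_1 = (a_0, a_1, b_2, b_0)$, $C_2 = (a_1, a_2, b_0, b_1)$, and $C_3 = (a_2, a_0, b_1, b_2)$. Each $C_i$ is chordless in $G$ since its two diagonals lie among the non-edges $\{a_0 b_2, a_1 b_0, a_2 b_1\}$. On each $C_i$ the known oriented edge $(a_i, b_i) \in F$ propagates, by the alternating condition, to determine the directions of the other three edges; collecting the outputs of the three cycles yields exactly the nine orientations asserted in the claim. I expect the main obstacle to be the first step: it is the only place the argument must reach beyond pure $\overline{F}$-transitivity and use the ``no alternating 4-cycle'' consequence of Claim~\ref{fact:diagonal}; once the diagonals are in, the remaining arguments are routine.
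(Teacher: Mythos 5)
Your proof is correct and follows essentially the same route as the paper's: rule out both $\overline{F}$-orientations of each long diagonal (one via the no-alternating-4-cycle observation, the other via transitivity of $\overline{F}$), obtain the three chordless 4-cycles such as $(a_0, a_1, b_2, b_0)$, and propagate each known orientation $(a_i, b_i)$ around its cycle by the alternation condition on $F$. The only cosmetic difference is that you derive the six triangle chords by explicit transitivity arguments where the paper simply invokes Claim~\ref{fact:diagonal}.
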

\begin{proof}
If $(b_2, a_1) \in \overline{F}$, then $(a_1, b_1, a_2, b_2)$ is 
an alternating 4-cycle, a contradiction. 
If $(a_1, b_2) \in \overline{F}$, then 
$(b_0, a_1), (b_2, a_0) \in \overline{F}$ implies $(b_0, a_0) \in \overline{F}$, 
contradicting $(a_0, b_0) \in F$. 
Thus $b_2a_1 \in E(G)$. 
We have from Claim~\ref{fact:diagonal} that 
$(a_0, b_0, b_2, a_1)$ is a chordless cycle of length 4. 
Since the directions of the edges on this cycle alternate in $F$, 
we have $(b_2, a_1), (b_2, b_0), (a_0, a_1) \in F$. 
By similar arguments, we also have 
$(b_0, a_2), (b_0, b_1), (a_1, a_2) \in F$ and 
$(b_1, a_0), (b_1, b_2), (a_2, a_0) \in F$. 
\end{proof}

Recall that for any directed edge $(u, v) \in F$, 
there are another two vertices $w, z$ of $G$ such that 
$(u, v, w, z)$ is a chordless cycle of length 4, 
since an edge of $G$ remains undirected if it is not 
on a chordless cycle of length 4. 
\begin{fact}\label{fact:A}
If there are four vertices $a, b, c, d$ with 
$(a, b), (b, c), (c, d) \in F$ and $(d, a) \in \overline{F}$, 
then there are another two vertices $e, f$ with 
$(b, e), (f, c), (f, e) \in F$ and $bf, ce \in E(\overline{G})$. 
In addition, 
\begin{itemize}
\item $(a, f), (d, b), (d, f), (c, a), (e, a), (e, d) \in F$ and 
\item there is an alternating 6-cycle 
consisting of the vertices $a, b, c, d, e, f$. 
\end{itemize}
\end{fact}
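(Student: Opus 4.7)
The plan is to proceed in three stages: first construct $e$ and $f$ via a chordless 4-cycle through the edge $(b, c)$; then pin down the $\overline{F}$-directions of the two diagonals of that cycle and thereby exhibit an alternating 6-cycle on the six vertices $a, b, c, d, e, f$; and finally apply Claim~\ref{fact:AC6} to harvest the remaining $F$-edges.

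For the first stage, since $(b, c) \in F$ and Step~2 of the algorithm leaves undirected every edge not on a chordless cycle of length four, the edge $bc$ lies on some chordless 4-cycle of $G$; let $(b, c, f, e)$ be one such cycle (vertices listed cyclically). The alternating-orientation property, together with $(b, c) \in F$, then forces $(b, e), (f, c), (f, e) \in F$ around the cycle. The two diagonals are $bf$ and $ce$, which are non-edges of $G$, so $bf, ce \in E(\overline{G})$. This already gives the first conjunct of the statement.

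The second and most delicate stage is to show that these diagonals are oriented as $(b, f), (e, c) \in \overline{F}$. The three ingredients are the hypothesis $(d, a) \in \overline{F}$, the transitivity of $\overline{F}$, and the absence of $\Delta$-obstructions (from condition~(1) of Step~2). For example, the $\Delta$-obstruction prohibition applied to $(a, b), (b, c) \in F$ rules out $(c, a) \in \overline{F}$, while transitivity combined with $(d, a) \in \overline{F}$ rules out $(a, c) \in \overline{F}$ (which would give $(d, c) \in \overline{F}$, contradicting $(c, d) \in F$), so $ac \in E(G)$; analogous reasoning yields $bd \in E(G)$, and more generally constrains the $\overline{F}$-direction of any non-edge among $\{a, e\}, \{a, f\}, \{d, e\}, \{d, f\}$ (for instance, $(d, f) \in \overline{F}$ is forbidden by the $\Delta$-obstruction through $(f, c), (c, d) \in F$). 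A case analysis — invoking the alternating constraint on chordless 4-cycles through these new edges and the non-existence of alternating 4-cycles and 4-anticycles in $F \cup \overline{F}$ established at the start of this section — then rules out the reverse orientations $(f, b) \in \overline{F}$ and $(c, e) \in \overline{F}$. I expect this case analysis to be the main obstacle.

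Once $(b, f), (e, c) \in \overline{F}$ is established, the sequence $(a, b, f, e, c, d)$ fits the definition of an alternating 6-cycle given at the start of this section, with $(a_0, b_0, a_1, b_1, a_2, b_2) = (a, b, f, e, c, d)$: one has $(a_i, b_i) \in F$ and $(b_i, a_{i+1}) \in \overline{F}$ for each $i \in \{0, 1, 2\}$ (indices modulo 3). Applying Claim~\ref{fact:AC6} to this alternating 6-cycle delivers $(a_0, a_1) = (a, f)$, $(b_2, b_0) = (d, b)$, $(b_2, a_1) = (d, f)$, $(a_2, a_0) = (c, a)$, $(b_1, a_0) = (e, a)$, and $(b_1, b_2) = (e, d)$ all in $F$ — exactly the six additional $F$-edges required — and the cycle itself is the alternating 6-cycle demanded by the final bullet.
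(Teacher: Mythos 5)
Your stages~1 and~3 are fine and agree with the paper: $e,f$ come from the chordless $4$-cycle $(b,c,f,e)$ through the oriented edge $(b,c)$, and \emph{if} the specific alternating $6$-cycle $(a,b,f,e,c,d)$ existed, Claim~\ref{fact:AC6} would indeed deliver the six remaining $F$-edges. The gap is your stage~2: the claim that the diagonals must be oriented as $(b,f),(e,c)\in\overline{F}$ is not derivable from the hypotheses, and is not true in general. None of the available tools pins down these directions. For instance, a $\Delta$-obstruction forbidding $(f,b)\in\overline{F}$ would require a directed $F$-path $b\to y\to f$, but the only $F$-edges known at $f$ point \emph{out} of $f$ (namely $(f,c),(f,e)$); transitivity of $\overline{F}$ gives nothing because, once the local structure is worked out, the only non-edges among the six vertices are the pairwise disjoint pairs $ad$, $bf$, $ce$, so no two of them share an endpoint; and one can check that each of the four orientations of $\{bf,ce\}$ avoids alternating $4$-cycles and $4$-anticycles. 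This is exactly why the paper's proof ends with ``for any directions assigned to the edges $bf, ce$'' and exhibits a \emph{different} alternating $6$-cycle in each case (e.g.\ $(a,b,f,c,e,d)$ when $(b,f),(c,e)\in\overline{F}$, versus $(a,f,b,e,c,d)$ when $(f,b),(e,c)\in\overline{F}$). Since your construction of the $6$-cycle hinges entirely on forcing one orientation, the proof does not go through.

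The paper's route, which you would need to adopt, inverts your order of deduction. It first shows that $ac$, $bd$, $ae$, $af$, $df$, $de$ are all \emph{edges of $G$} (not merely $\overline{F}$-constrained non-edges), using $\Delta$-obstruction freeness, transitivity of $\overline{F}$ via $(d,a)$, and the alternation condition on auxiliary chordless $4$-cycles such as $(a,e,f,c)$ and $(d,b,e,f)$. Consequently $(a,b,d,f)$ and $(a,e,d,c)$ are chordless $4$-cycles (their diagonals are $ad,bf$ and $ad,ce$ respectively), and alternation on these two cycles yields $(a,f),(d,b),(d,f)\in F$ and $(c,a),(e,a),(e,d)\in F$ directly, without any appeal to Claim~\ref{fact:AC6}. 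Only after all twelve $F$-edges among the six vertices are in hand does one verify, case by case over the four orientations of $bf$ and $ce$, that an alternating $6$-cycle is present. Your proposal defers precisely this edge-establishing work (you leave open whether $ae$, $af$, $de$, $df$ are edges) and replaces it with a determination of the diagonal orientations that cannot be made.
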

\begin{proof}
Since $(b, c)$ is oriented in $F$, 
there are another two vertices $e, f$ with 
$(b, e), (f, c), (f, e) \in F$ and 
$bf, ce \in E(\overline{G})$. 
If $(a, c) \in \overline{F}$, then 
$(d, a) \in \overline{F}$ implies $(d, c) \in \overline{F}$, 
contradicting $(c, d) \in F$. 
If $(c, a) \in \overline{F}$, then $(a, b, c)$ 
is a $\Delta$-obstruction, a contradiction. 
Thus $ac \in E(G)$. 
Similarly, 
if $(b, d) \in \overline{F}$, then 
$(d, a) \in \overline{F}$ implies $(b, a) \in \overline{F}$, 
contradicting $(a, b) \in F$. 
If $(d, b) \in \overline{F}$, then $(b, c, d)$ 
is a $\Delta$-obstruction, a contradiction. 
Thus $bd \in E(G)$. 
\par
If $(e, a) \in \overline{F}$, then $(a, b, e)$ 
is a $\Delta$-obstruction, a contradiction. 
Suppose $(a, e) \in \overline{F}$. 
If $af \in E(G)$, then $(a, b, e, f)$ is a chordless cycle of length 4 
with $(a, b), (b, e) \in F$, 
contradicting the definition of $F$. 
Thus $af \in E(\overline{G})$. 
If $(f, a) \in \overline{F}$, then 
$(a, e) \in \overline{F}$ implies $(f, e) \in \overline{F}$, 
contradicting $(f, e) \in F$. 
If $(a, f) \in \overline{F}$, then 
$(d, a) \in \overline{F}$ implies $(d, f) \in \overline{F}$, 
but then $(f, c, d)$ is a $\Delta$-obstruction, a contradiction. 
Thus $ae \in E(G)$. 
\par
If $af \in E(\overline{G})$, then $(a, e, f, c)$ 
is a chordless cycle of length 4. 
Since $(f, c), (f, e) \in F$, 
we have $(a, c), (a, e) \in F$, but then $(a, c, d)$ 
is a $\Delta$-obstruction, a contradiction. 
Thus $af \in E(G)$. 
If $(d, f) \in \overline{F}$, then $(f, c, d)$ 
is a $\Delta$-obstruction, a contradiction. 
If $(f, d) \in \overline{F}$, then 
$(d, a) \in \overline{F}$ implies $(f, a) \in \overline{F}$, 
contradicting $af \in E(G)$. 
Thus $df \in E(G)$. 
If $de \in E(\overline{G})$, then $(d, b, e, f)$ 
is a chordless cycle of length 4. 
Since $(b, e), (f, e) \in F$, 
we have $(b, d), (f, d) \in F$, but then $(a, b, d)$ 
is a $\Delta$-obstruction, a contradiction. 
Thus $de \in E(G)$. 
\par
Since $(a, b, d, f)$ is a chordless cycle of length 4 
and $(a, b) \in F$, we have $(a, f), (d, b), (d, f) \in F$. 
Since $(a, e, d, c)$ is a chordless cycle of length 4 
and $(c, d) \in F$, we also have $(c, a), (e, a), (e, d) \in F$. 
We can verify that 
for any directions are assigned to the edges $bf, ce \in E(\overline{G})$, 
there is an alternating 6-cycle 
consisting of the vertices $a, b, c, d, e, f$. 
\end{proof}

\begin{fact}\label{fact:C3-AC6}
If there is a directed cycle $(a, b, c)$ in $F$, then 
there is an alternating 6-cycle of $F \cup \overline{F}$ 
containing the vertices $a, b, c$. 
\end{fact}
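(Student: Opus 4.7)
The plan is to reduce the claim to Claim~\ref{fact:A} by constructing a suitable fourth vertex $d$ via a chordless 4-cycle through one of the three $F$-edges $(a,b)$, $(b,c)$, or $(c,a)$ of the given directed cycle. Since $(c,a)\in F$, condition~(3) on $F$ guarantees that the edge $ca$ lies on a chordless 4-cycle of $G$; denote its other two vertices by $p$ and $q$, so that the alternating orientation forces $(c,a),(c,q),(p,q),(p,a)\in F$ with $cp$ and $aq$ the two non-edges.

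First I would check the orientation of $aq$. If $(q,a)\in\overline{F}$, then $(a,b),(b,c),(c,q)\in F$ together with $(q,a)\in\overline{F}$ meet the hypothesis of Claim~\ref{fact:A} with $d=q$, yielding an alternating 6-cycle on $\{a,b,c,q,e,f\}$ (with $e,f$ coming from the 4-cycle through $(b,c)$) that contains $a,b,c$. Otherwise $(a,q)\in\overline{F}$, and I would turn to the other non-edge $cp$: if $(c,p)\in\overline{F}$, then after relabeling, the $F$-path $(p,a),(a,b),(b,c)$ with back edge $(c,p)\in\overline{F}$ again satisfies Claim~\ref{fact:A}'s hypothesis, producing a 6-cycle on a vertex set containing $p,a,b,c$. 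Repeating the same two-step analysis for the chordless 4-cycles through $(a,b)$ and through $(b,c)$ gives in total six candidate applications of Claim~\ref{fact:A}; if any one of them succeeds, the claim follows.

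The main obstacle is the case in which all six applications of Claim~\ref{fact:A} fail simultaneously. This forces the six non-edges $aq,cp,bs,ar,ce,bf$ arising from the three 4-cycles to have the specific $\overline{F}$-orientations $(a,q),(p,c),(b,s),(r,a),(c,e),(f,b)\in\overline{F}$. My plan for this case is to propagate these orientations by transitivity of $\overline{F}$---for instance, $(r,a)$ and $(a,q)$ yield $(r,q)\in\overline{F}$, and the two analogous chains yield $(f,s),(p,e)\in\overline{F}$---and then combine the resulting $\overline{F}$-edges with the no-$\Delta$-obstruction property of $F\cup\overline{F}$ and the alternating property of $F$ on every chordless 4-cycle of $G$, in order either to produce an explicit alternating 6-cycle through $a,b,c$ on the auxiliary vertices or to derive a contradiction. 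Controlling the interactions among the three 4-cycles and the many derived edges and non-edges is the most delicate part of the argument.
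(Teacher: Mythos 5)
Your opening moves coincide with the paper's: use condition~(3) to place each oriented edge of the directed triangle on a chordless 4-cycle, and whenever one of the resulting non-edges is oriented the ``right'' way in $\overline{F}$, feed the resulting directed 4-cycle of $F\cup\overline{F}$ into Claim~\ref{fact:A}. (The paper gets by with the 4-cycles through $(a,b)$ and $(b,c)$ only; invoking the third one is harmless but just adds bookkeeping.) The genuine gap is that the entire content of the claim lives in the residual case where all of these applications fail, and there you have a plan rather than an argument. Note also that this case cannot be closed by ``deriving a contradiction'': the residual configuration is precisely the realizable one of Figure~\ref{figs} (by Claim~\ref{fact:AC6}, the $a_i$'s of an alternating 6-cycle themselves form a directed triangle in $F$), so the only way to finish is to exhibit the alternating 6-cycle explicitly, and transitivity of $\overline{F}$ alone does not produce the adjacencies needed to do that.

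Concretely, writing $d,e$ for the partners of $(a,b)$ and $f,g$ for the partners of $(b,c)$, with $(e,a),(b,d),(g,b),(c,f)\in\overline{F}$ in the residual case, the paper first checks that the seven vertices are distinct, then proves $ef\in E(G)$ by ruling out $(e,f)\in\overline{F}$ (which would make $b,f,e,d$ an alternating 4-anticycle) and $(f,e)\in\overline{F}$ (which would make $(e,b,f)$ a $\Delta$-obstruction). Claim~\ref{fact:diagonal} then makes $(a,c,e,f)$ a chordless 4-cycle, whose alternation forces $(c,e),(f,a),(f,e)\in F$; a parallel argument yields $cd\in E(G)$ and $(d,c),(d,f)\in F$; and the alternating 6-cycle is $(a,b,d,c,f,e)$. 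None of these steps --- the 4-anticycle and $\Delta$-obstruction eliminations, the use of Claim~\ref{fact:diagonal} to manufacture new chordless 4-cycles, and the identification of which six of the auxiliary and original vertices actually form the cycle --- appears in your outline, so the proof is incomplete as written.
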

\begin{proof}
Since $(a, b)$ is oriented in $F$, 
there are another two vertices $d, e$ with 
$(a, d), (e, b), (e, d) \in F$ and 
$ae, bd \in E(\overline{G})$. 
We have $c \neq d$ from $bc \in E(G)$ and $bd \in E(\overline{G})$. 
We also have $c \neq e$ from $(b, c), (e, b) \in F$. 
If $(a, e) \in \overline{F}$, then we have the claim from Claim~\ref{fact:A} 
since $(e, b, c, a)$ is a directed cycle of length 4 in $F \cup \overline{F}$. 
If $(d, b) \in \overline{F}$, then we also have the claim from Claim~\ref{fact:A} 
since $(b, c, a, d)$ is a directed cycle of length 4 in $F \cup \overline{F}$. 
Therefore, we assume $(e, a), (b, d) \in \overline{F}$. 
\par
Since $(b, c)$ is oriented in $F$, 
there are another two vertices $f, g$ with 
$(b, f), (g, c), (g, f) \in F$ and 
$bg, cf \in E(\overline{G})$. 
We have $a \neq f$ from $(a, b), (b, f) \in F$. 
We also have $a \neq g$ from $ab \in E(G)$ and $bg \in E(\overline{G})$. 
If $(b, g) \in \overline{F}$, then we have the claim from Claim~\ref{fact:A} 
since $(g, c, a, b)$ is a directed cycle of length 4 in $F \cup \overline{F}$. 
If $(f, c) \in \overline{F}$, then we also have the claim from Claim~\ref{fact:A} 
since $(c, a, b, f)$ is a directed cycle of length 4 in $F \cup \overline{F}$. 
Therefore, we assume $(g, b), (c, f) \in \overline{F}$. 
\par
We have $d \neq f$ from $bd \in E(\overline{G})$ and $bf \in E(G)$. 
We also have $d \neq g$ from $(b, d), (g, b) \in \overline{F}$. 
Similarly, 
we have $e \neq f$ from $(e, b), (b, f) \in F$. 
We also have $e \neq g$ from $be \in E(G)$ and $bg \in E(\overline{G})$. 
Therefore, the seven vertices $a, b, c, d, e, f, g$ are distinct. 
\par
If $(e, f) \in \overline{F}$, then four vertices $b, f, e, d$ form 
an alternating 4-anticycle, a contradiction. 
If $(f, e) \in \overline{F}$, then $(e, b, f)$ is 
a $\Delta$-obstruction, a contradiction. 
Thus $ef \in E(G)$. 
Now, we have from Claim~\ref{fact:diagonal} that 
$(a, c, e, f)$ is a chordless cycle of length 4. 
Since $(c, a) \in F$, we have $(c, e), (f, a), (f, e) \in F$. 
\par
If $(d, c) \in \overline{F}$, then 
$(b, d) \in \overline{F}$ implies $(b, c) \in \overline{F}$, 
contradicting $(b, c) \in F$. 
If $(c, d) \in \overline{F}$, then four vertices $c, a, e, d$ form 
an alternating 4-anticycle, a contradiction. 
Thus $cd \in E(G)$. 
Now, we have from Claim~\ref{fact:diagonal} that 
$(c, d, f, b)$ is a chordless cycle of length 4. 
Since $(b, c), (b, f) \in F$, we have $(d, c), (d, f) \in F$. 
\par
Now, we can see that $(a, b, d, c, f, e)$ is an alternating 6-cycle 
in $F \cup \overline{F}$. 
\end{proof}

\begin{fact}\label{fact:C3-diamond}
If there is a directed cycle $(a, b, c)$ in $F$, then 
there is another vertex $d$ with 
$(c, d), (d, b) \in F$ and either 
$(a, d) \in \overline{F}$ or $(d, a) \in \overline{F}$. 
\end{fact}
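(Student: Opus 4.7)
The plan is to extract the required vertex $d$ from the alternating 6-cycle supplied by Claim~\ref{fact:C3-AC6} and then read off the three required properties as immediate consequences of Claim~\ref{fact:AC6} and the definition of alternating 6-cycle. First, since $(a, b, c)$ is a directed cycle in $F$, Claim~\ref{fact:C3-AC6} yields an alternating 6-cycle in $F \cup \overline{F}$ containing $a, b, c$; the construction inside its proof produces one of the explicit cyclic form $(a, b, x, c, y, d)$, where $x, y, d$ are new vertices distinct from $a, b, c$ and from one another.

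Next, rewriting this 6-cycle in the notation $(a_0, b_0, a_1, b_1, a_2, b_2)$ of Claim~\ref{fact:AC6} gives the identifications $a_0 = a$, $b_0 = b$, $a_1 = x$, $b_1 = c$, $a_2 = y$, $b_2 = d$. Applying Claim~\ref{fact:AC6} to this 6-cycle yields the chord orientations $(c, d) = (b_1, b_2) \in F$ and $(d, b) = (b_2, b_0) \in F$, while the definition of alternating 6-cycle directly supplies $(d, a) = (b_2, a_0) \in \overline{F}$. In particular $ad \in E(\overline{G})$, so one of $(a, d)$ and $(d, a)$ lies in $\overline{F}$, which is all that the claim requires.

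The only point that needs care is confirming that $a, b, c$ indeed occupy the positions $a_0, b_0, b_1$ of the 6-cycle, so that the sought vertex is unambiguously $d = b_2$. This is immediate from the specific form of the cycle $(a, b, x, c, y, d)$ produced by the proof of Claim~\ref{fact:C3-AC6}: the listed cyclic order forces $a = a_0$, $b = b_0$, $c = b_1$. Once this matching is fixed, the argument collapses into three single-line lookups in Claim~\ref{fact:AC6} and the 6-cycle definition, and no further case analysis is needed.
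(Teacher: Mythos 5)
There is a genuine gap: your argument hinges on the assertion that the alternating 6-cycle supplied by Claim~\ref{fact:C3-AC6} has the explicit cyclic form $(a, b, x, c, y, d)$, i.e.\ that $a, b, c$ occupy the positions $a_0, b_0, b_1$. Claim~\ref{fact:C3-AC6} only guarantees a 6-cycle \emph{containing} $a, b, c$, and its proof pins down the arrangement only in its final case (where the cycle is $(a,b,d,c,f,e)$). In the four earlier cases the proof exits through Claim~\ref{fact:A}, which merely promises ``an alternating 6-cycle consisting of the vertices $a,b,c,d,e,f$'' whose arrangement depends on the unknown orientations of the two non-edges $bf, ce$. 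Concretely, in the case $(a,e)\in\overline{F}$ one applies Claim~\ref{fact:A} to the 4-cycle $(e,b,c,a)$, obtaining new vertices $e', f'$ with non-edges $ae$, $bf'$, $ce'$; if $(b,f'),(c,e')\in\overline{F}$ the resulting alternating 6-cycle is $(e,b,f',c,e',a)$, so in the notation of Claim~\ref{fact:AC6} one has $a=b_2$, $b=b_0$, $c=b_1$ (the triangle $(a,b,c)$ is the ``$b$-triangle'' of the cycle). Your prescription $d=b_2$ then returns $a$ itself, which is not a valid witness. So the position-matching step is not merely unjustified; it is false in some branches.

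The conclusion is nevertheless correct, and the repair is short: whatever positions $a,b,c$ occupy, take $d$ to be the unique vertex of the 6-cycle non-adjacent to $a$ in $G$ (the other endpoint of the $\overline{F}$-edge through $a$). By Claim~\ref{fact:AC6} every pair $\{a_i, b_{i-1}\}$ of non-adjacent vertices on the cycle has identical in-neighborhoods and identical out-neighborhoods (in $F$) among the remaining four vertices; e.g.\ both $a_0$ and $b_2$ have out-neighbors $\{b_0, a_1\}$ and in-neighbors $\{b_1, a_2\}$. Hence $(a,b)\in F$ forces $(d,b)\in F$ and $(c,a)\in F$ forces $(c,d)\in F$, and $ad\in E(\overline{G})$ gives $(a,d)\in\overline{F}$ or $(d,a)\in\overline{F}$. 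Up to the rotation $a_i\mapsto a_{i+1}$, $b_i\mapsto b_{i+1}$ this reduces to checking the two cases $a=a_0$ and $a=b_0$, which is the case analysis your write-up skips. With that substitution your proof matches the (terse) intent of the paper, which likewise derives the claim from Claims~\ref{fact:C3-AC6} and~\ref{fact:AC6}.
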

\begin{proof}
We have the claim from Claims~\ref{fact:C3-AC6} and~\ref{fact:AC6}. 
\end{proof}

\subsection{Main part of the proof}
Now, we show the following series of claims, 
which proves the correctness of the algorithm. 
It is immediate from Claim~\ref{fact:AC6} that 
if $F \cup \overline{F}$ contains an alternating 6-cycle, 
then $F$ contains a directed cycle of length 3. 
The following claim states that the converse is also true. 
(Recall that $F$ contains undirected edges. 
Thus $F$ may contain directed cycles of length greater than 3, 
and Claim~\ref{fact:C3-AC6} does not imply the converse. )
\begin{claim}
If $F$ is not acyclic, then 
$F \cup \overline{F}$ contains an alternating 6-cycle.
\end{claim}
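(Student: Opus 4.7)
The plan is to reduce to Claim~\ref{fact:C3-AC6} by showing that whenever $F$ contains any directed cycle, it must already contain one of length~$3$. I pick a shortest directed cycle $C=(v_0,v_1,\ldots,v_{k-1})$ in $F$. If $k=3$, Claim~\ref{fact:C3-AC6} finishes the proof, so I assume for contradiction that $k\geq 4$ and I analyse each diagonal $v_iv_{i+2}$ (indices modulo $k$). The case $(v_i,v_{i+2})\in F$ produces the directed cycle $(v_i,v_{i+2},v_{i+3},\ldots,v_{i-1})$ of length $k-1\geq 3$, contradicting minimality; the case $(v_{i+2},v_i)\in F$ produces a directed $3$-cycle $(v_i,v_{i+1},v_{i+2})$, to which Claim~\ref{fact:C3-AC6} applies; and the case $v_iv_{i+2}\notin E(G)$ with $(v_{i+2},v_i)\in\overline{F}$ makes $(v_i,v_{i+1},v_{i+2})$ a $\Delta$-obstruction, contradicting condition~(1) on $F$. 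The only surviving possibilities for each $i$ are therefore $(v_i,v_{i+2})\in\overline{F}$, or $v_iv_{i+2}\in E(G)$ undirected in $F$ and hence, by condition~(3), not lying on any chordless $4$-cycle of $G$.

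I next dispose of the case in which $(v_i,v_{i+2})\in\overline{F}$ holds for every $i$. Transitivity of $\overline{F}$ applied along the step-$2$ walk $v_0\to v_2\to v_4\to\cdots$ gives $(v_0,v_{2j})\in\overline{F}$ for every $j\geq 1$. If $k$ is even, $j=k/2$ forces the impossible arc $(v_0,v_0)\in\overline{F}$ (already for $k=4$ we obtain both $(v_0,v_2)$ and $(v_2,v_0)$ in $\overline{F}$, which no orientation can contain); if $k$ is odd, the step size $2$ is coprime to $k$ and the walk eventually reaches $v_1$, forcing $(v_0,v_1)\in\overline{F}$ and contradicting $(v_0,v_1)\in F$. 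Either way the transitive orientation $\overline{F}$ cannot be acyclic, and we are done.

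The remaining case, where some diagonal (say $v_0v_2$) is an undirected edge of $G$ lying on no chordless $4$-cycle, is the main technical obstacle. My plan is to manufacture a chordless $4$-cycle through $v_0v_2$ using the chordless $4$-cycle that condition~(3) forces to carry the directed edge $(v_0,v_1)\in F$. That cycle has the form $(v_0,v_1,p,q)$ in $G$, is alternately oriented by condition~(2) as $(v_0,v_1),(p,v_1),(p,q),(v_0,q)\in F$, and satisfies $v_0p,v_1q\notin E(G)$. A routine verification---ruling out $p=v_i$ or $q=v_i$ via the absence of $3$-cycles, $\Delta$-obstructions, and shorter directed cycles in $F$---shows that $p$ and $q$ are distinct from every cycle vertex. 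The four vertices $v_0,v_2,p,q$ already contribute the edges $v_0v_2,pq,v_0q\in E(G)$ and the non-edge $v_0p\notin E(G)$, so a case analysis on $v_2p$ and $v_2q$ shows that either $(v_0,v_2,p,q)$ is a chordless $4$-cycle through $v_0v_2$ (contradicting the hypothesis on $v_0v_2$), or the forced orientations yield a $\Delta$-obstruction, a shorter directed cycle, or a transitivity conflict in $\overline{F}$ (for instance, when both $v_2p$ and $v_2q$ are non-edges, condition~(1) forces $(p,v_2)\in\overline{F}$, and the combination $(p,v_2),(v_2,q)\in\overline{F}$ is ruled out because transitivity would yield $(p,q)\in\overline{F}$, contradicting $(p,q)\in F$). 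I expect this exhaustive sub-case bookkeeping to be the principal technical difficulty, since the orientation of every edge incident to $v_2,p,q$ has to be tracked carefully through conditions~(1)--(3) and through the transitivity of $\overline{F}$, possibly requiring the introduction of further certificate $4$-cycles for the other directed edges of $C$.
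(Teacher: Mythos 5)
Your handling of the easy cases is sound: on a shortest directed cycle $C=(v_0,\ldots,v_{k-1})$ of $F$ with $k\ge 4$, a chord $v_iv_{i+2}$ oriented in $F$ contradicts minimality or reduces to Claim~\ref{fact:C3-AC6}, a non-edge chord oriented $(v_{i+2},v_i)$ in $\overline{F}$ gives a $\Delta$-obstruction, and the step-$2$ transitivity argument correctly kills the case where every chord lies in $\overline{F}$ (though the contradiction there is an antiparallel pair in $\overline{F}$ or an arc of $\overline{F}$ parallel to an edge of $G$, not acyclicity). The genuine gap is the remaining case, which you yourself flag as ``the principal technical difficulty'' and leave as a sketch: a chord $v_0v_2\in E(G)$ that is undirected in $F$. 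Your plan---force $v_0v_2$ onto a chordless $4$-cycle using only the certificate $4$-cycle $(v_0,v_1,p,q)$ of $(v_0,v_1)$ and a case analysis on $v_2p$, $v_2q$---cannot close. In the sub-case $v_2p,v_2q\notin E(G)$, the only orientations your argument does not exclude are $(p,v_2),(q,v_2)\in\overline{F}$, and the resulting configuration on $\{v_0,v_1,v_2,p,q\}$ (with $(p,v_0),(q,v_1)\in\overline{F}$) is fully consistent: one checks it has no $\Delta$-obstruction, no alternating $4$-anticycle, no transitivity violation in $\overline{F}$, no short directed $F$-cycle, and no unexamined chordless $4$-cycle. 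Any contradiction must therefore be extracted from the certificate of $(v_1,v_2)$ or from vertices further along $C$, and that bookkeeping---which you defer with ``possibly requiring further certificate $4$-cycles''---is exactly the part that has not been done and is not routine.

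The paper avoids this obstacle with a device your proof is missing: the \emph{middle edge} of a chordless $4$-cycle. Every directed edge of $F$ is first replaced by a directed path in $F\cup\overline{F}$ made of $\overline{F}$-arcs and the middle edge of its certificate, so a directed cycle of $F$ becomes a directed cycle of $F\cup\overline{F}$ built only from $\overline{F}$-arcs and middle edges. A middle edge carries its certificate in the right position (each endpoint has an $\overline{F}$-neighbour inside the certificate), so two consecutive middle edges $(v_i,v_{i+1}),(v_{i+1},v_{i+2})$ supply precisely the two $\overline{F}$-arcs needed to invoke Claim~\ref{fact:diagonal} and force $v_iv_{i+2}$ onto a chordless $4$-cycle---the step that fails for generic consecutive $F$-edges of your shortest cycle, where the certificates sit in the wrong position relative to $v_i$ and $v_{i+2}$. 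The paper then contracts consecutive $\overline{F}$-arcs by transitivity to obtain an alternating cycle and reduces alternating $2k$-cycles, $k\ge 4$, to an alternating $6$-cycle. Unless you complete your sub-case analysis (which would in effect re-derive this middle-edge interaction in a harder setting), the proof is incomplete at its central step.
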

\begin{proof}
Before proving the claim, 
we introduce the \emph{middle edge} of a chordless cycle of length 4. 
Let $C_4 = (a, b, c, d)$ be a cycle of length 4 in $G$ with 
$(a, b), (a, d), (c, b), (c, d) \in F$ and 
$(a, c), (b, d) \in \overline{F}$. 
We call the edge $(c, b)$ the \emph{middle edge} of $C_4$ in $F \cup \overline{F}$. 
We also say an edge is a middle edge in $F \cup \overline{F}$ 
if there is a chordless cycle of length 4 such that 
the edge is the middle edge of the cycle. 
Notice that for any edge $(u, v)$ on $C_4$, 
there is the directed path from $u$ to $v$ in $F \cup \overline{F}$ 
consisting of the edges in $\overline{F}$ and the middle edge. 
For example, there is the directed path $(a, c, b, d)$ 
for the edge $(a, d)$. 
\par
Recall that each directed edge in $F$ is on a cycle of length 4. 
Thus if $F$ contains a directed cycle, then 
$F \cup \overline{F}$ contains 
a directed cycle $C = (v_0, v_1, \ldots, v_{k-1})$ with $k \geq 3$ 
consisting of the edges in $\overline{F}$ and 
the middle edges in $F \cup \overline{F}$. 
\par
We now show that 
if $(v_i, v_{i+1})$ and $(v_{i+1}, v_{i+2})$ are middle edges 
in $F \cup \overline{F}$ for some $i$ (indices are modulo $k$), 
then either $(v_i, v_{i+2}) \in F$ or $(v_{i+2}, v_i) \in F$. 
In addition, we show that if $(v_i, v_{i+2}) \in F$, then 
$(v_i, v_{i+2})$ is a middle edge in $F \cup \overline{F}$. 
Since $(v_i, v_{i+1})$ is a middle edge, 
there are another two vertices $a, b$ with 
$(v_i, a), (b, v_{i+1}), (b, a) \in F$ and 
$(b, v_i), (v_{i+1}, a) \in \overline{F}$. 
Similarly, 
since $(v_{i+1}, v_{i+2})$ is a middle edge, 
there are another two vertices $c, d$ with 
$(v_{i+1}, c), (d, v_{i+2}), (d, c) \in F$ and 
$(v_{i+2}, c), (d, v_{i+1}) \in \overline{F}$. 
We have $v_i \neq c$ from $(v_i, v_{i+1}), (v_{i+1}, c) \in F$. 
We also have $v_i \neq d$ from $v_iv_{i+1} \in E(G)$ and $v_{i+1}d \in E(\overline{G})$. 
Similarly, 
we have $a \neq v_{i+2}$ from $av_{i+1} \in E(\overline{G})$ and $v_{i+1}v_{i+2} \in E(G)$. 
We also have $a \neq c$ 
from $av_{i+1} \in E(\overline{G})$ and $v_{i+1}c \in E(G)$. 
Moreover, we have $a \neq d$ from $(v_{i+1}, a), (d, v_{i+1}) \in \overline{F}$. 
By similar arguments, we have $b \neq v_{i+2}, c, d$. 
Therefore, the seven vertices $v_i, v_{i+1}, v_{i+2}, a, b, c, d$ are distinct. 
\par
If $(b, c) \in \overline{F}$, then four vertices $v_{i+1}, c, b, a$ form 
an alternating 4-anticycle, a contradiction. 
If $(c, b) \in \overline{F}$, then $(b, v_{i+1}, c)$ is 
a $\Delta$-obstruction, a contradiction. 
Thus $bc \in E(G)$. 
If $(v_i, v_{i+2}) \in \overline{F}$, then 
$(b, v_i), (v_{i+2}, c) \in \overline{F}$ implies $(b, c) \in \overline{F}$, 
contradicting $bc \in E(G)$. 
If $(v_{i+2}, v_i) \in \overline{F}$, then $(v_i, v_{i+1}, v_{i+2})$ is 
a $\Delta$-obstruction, a contradiction. 
Thus $v_iv_{i+2} \in E(G)$. 
Now, we have from Claim~\ref{fact:diagonal} that 
$(v_i, v_{i+2}, b, c)$ is a chordless cycle of length 4, 
and hence the edge $v_iv_{i+2}$ is oriented in $F$. 
When $(v_i, v_{i+2}) \in F$, 
the edge $(v_i, v_{i+2})$ is the middle edge 
of the cycle $(v_i, v_{i+2}, b, c)$. 
\par
We have from Claim~\ref{fact:C3-AC6} that 
if $(v_i, v_{i+1}), (v_{i+1}, v_{i+2}), (v_{i+2}, v_i) \in F$, then 
$F \cup \overline{F}$ contains an alternating 6-cycle, and 
we have the claim. 
Thus we can assume that the directed cycle $C$ contains 
no two middle edges that are consecutive on $C$. 
Obviously, if $(v_i, v_{i+1}), (v_{i+1}, v_{i+2}) \in \overline{F}$ 
for some $i$ (indices are modulo $k$), then 
$(v_i, v_{i+2}) \in \overline{F}$. 
Therefore, we can assume that $C$ contains 
no two edges in $\overline{F}$ that are consecutive on $C$, 
that is, $C$ is an alternating cycle. 
\par
It remains to show that if 
$F \cup \overline{F}$ contains an alternating $2k$-cycle with $k \geq 4$, then 
$F \cup \overline{F}$ also contains an alternating 6-cycle. 
Suppose that $F \cup \overline{F}$ contains an alternating $2k$-cycle 
with $k \geq 4$ but it contains no alternating 6-cycles. 
Let $(a_0, b_0, a_1, b_1, \ldots, a_{k-1}, b_{k-1})$ be 
such an alternating $2k$-cycle with 
$(a_i, b_i) \in F$ and $(b_i, a_{i+1}) \in \overline{F}$ 
for any $i = 0, 1, \ldots, k-1$ (indices are modulo $k$). 
We assume without loss of generality that 
the length of this cycle is minimal, that is, 
$F \cup \overline{F}$ contains no alternating cycle 
of length smaller than $2k$. 
Since $k \geq 4$, the vertices 
$a_0, b_0, a_1, b_1, a_2, b_2, a_3, b_3$ are distinct. 
If $(b_2, a_1) \in \overline{F}$, then 
$(a_1, b_1, a_2, b_2)$ is an alternating 4-cycle, a contradiction. 
If $(b_0, a_3) \in \overline{F}$, then 
$(a_0, b_0, a_3, b_3, \ldots, a_{k-1}, b_{k-1})$ 
is an alternating $2(k-2)$-cycle, a contradiction. 
If $(a_1, b_2) \in \overline{F}$, then 
$(b_0, a_1), (b_2, a_3) \in \overline{F}$ implies $(b_0, a_3) \in \overline{F}$, 
a contradiction. 
If $(a_3, b_0) \in \overline{F}$, then 
$(b_2, a_3), (b_0, a_1) \in \overline{F}$ implies $(b_2, a_1) \in \overline{F}$, 
a contradiction. 
Thus $a_1b_2, b_0a_3 \in E(G)$. 
Now, we have from Claim~\ref{fact:diagonal} that 
$(a_1, b_2, b_0, a_3)$ is a chordless cycle of length 4. 
If $(a_1, b_2), (a_1, a_3), (b_0, b_2), (b_0, a_3) \in F$, then 
$(a_0, b_0, a_1, b_2, a_3, b_3, \ldots, a_{k-1}, b_{k-1})$ 
is an alternating $2(k-1)$-cycle, a contradiction. 
If $(b_2, a_1), (a_3, a_1), (b_2, b_0), (a_3, b_0) \in F$, then 
$(a_1, b_1, a_2, b_2, a_3, b_0)$ is an alternating 6-cycle, a contradiction. 
\end{proof}

Recall that 
$F_v$ is the set of all the edges $(w, u) \in F$ 
such that $(u, v, w)$ is a directed cycle in $F$. 
Recall also that $F'$ is the partial orientation of $G$ obtained from $F$ 
by reversing the orientation of all the edges in $F_v$, that is, 
$F' = (F - F_v) + F_v^{-1}$. 
The following claim states that 
$F' \cup \overline{F}$ contains no alternating 6-cycles having the vertex $v$. 
\begin{claim}\label{claim:2}
There is no alternating 6-cycles of $F' \cup \overline{F}$ 
having the vertex $v$. 
\end{claim}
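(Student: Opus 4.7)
The plan is to argue by contradiction: suppose $F' \cup \overline{F}$ contains an alternating 6-cycle $(c_0, d_0, c_1, d_1, c_2, d_2)$ through $v$, with $(c_i, d_i) \in F'$ and $(d_i, c_{i+1}) \in \overline{F}$ (indices modulo $3$). The 6-cycle has a 3-fold cyclic symmetry $c_i \mapsto c_{i+1}$, $d_i \mapsto d_{i+1}$, so one may assume $v = c_0$ or $v = d_0$. The driving observation is that no edge of $F_v$ is incident to $v$, so any $F'$-edge touching $v$ inherits its orientation from $F$; furthermore, an $F'$-edge $(c_i, d_i)$ with $v \notin \{c_i, d_i\}$ comes from a reversal only if $(c_i, v), (v, d_i), (d_i, c_i) \in F$, which in particular requires both $c_iv$ and $vd_i$ to be edges of $G$.

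\emph{Case A} ($v = c_0$). The edge $(v, d_0)$ is in $F$, being incident to $v$. Since $(d_2, v) \in \overline{F}$ forces $vd_2 \notin E(G)$, the edge $(c_2, d_2)$ cannot be a reversal, so $(c_2, d_2) \in F$. I then split on the status of $(c_1, d_1)$. If $(c_1, d_1) \in F$, then $(v, d_0, c_1, d_1, c_2, d_2)$ is an alternating 6-cycle already in $F \cup \overline{F}$, and Claim~\ref{fact:AC6} supplies the chords $(v, c_1), (d_1, v) \in F$; so $(v, c_1, d_1)$ is a directed 3-cycle of $F$ through $v$, placing $(c_1, d_1) \in F_v$, which contradicts $(c_1, d_1) \in F'$. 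Otherwise $(d_1, c_1) \in F$ came from $F_v$, yielding $(c_1, v), (v, d_1) \in F$; then $(c_1, v), (v, d_0) \in F$ together with $(d_0, c_1) \in \overline{F}$ constitute the $\Delta$-obstruction $(c_1, v, d_0)$, contradicting that $F \cup \overline{F}$ contains none.

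\emph{Case B} ($v = d_0$) is symmetric. Here $(c_0, v) \in F$, and because $(v, c_1) \in \overline{F}$ forces $vc_1 \notin E(G)$, reversal of $(c_1, d_1)$ is impossible, so $(c_1, d_1) \in F$. If additionally $(c_2, d_2) \in F$, then Claim~\ref{fact:AC6} applied to $(c_0, v, c_1, d_1, c_2, d_2)$ gives $(v, c_2), (d_2, v) \in F$, and $(v, c_2, d_2)$ is a directed 3-cycle of $F$ placing $(c_2, d_2) \in F_v$, a contradiction. Otherwise $(c_2, v), (v, d_2) \in F$, and $(c_0, v), (v, d_2) \in F$ with $(d_2, c_0) \in \overline{F}$ form the $\Delta$-obstruction $(c_0, v, d_2)$.

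The main obstacle is the bookkeeping around the reversals: each of the two ``far'' $F'$-edges of the 6-cycle could in principle be either an unchanged $F$-edge or a reversal of an $F$-edge that $F_v$ flipped, giving four nominal subcases. The simplifying insight is that the two $\overline{F}$-edges of the 6-cycle incident to $v$ each veto one of the two possible reversals, because a reversal would demand an $F$-edge from $v$ to a vertex that $\overline{F}$ declares a non-neighbour of $v$. This collapses the analysis to a single branch per case, and in each branch one either uses Claim~\ref{fact:AC6} to manufacture a directed 3-cycle of $F$ through $v$ that itself forces a contradictory reversal of a 6-cycle edge, or one reads off a $\Delta$-obstruction from the edges incident to $v$ directly.
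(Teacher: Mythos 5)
Your proof is correct and follows essentially the same route as the paper: use the cyclic symmetry to reduce to $v=a_0$ or $v=b_0$, rule out reversal of the two edges blocked by the $\overline F$-edges at $v$, and in the remaining edge's two subcases either apply Claim~\ref{fact:AC6} to force that edge into $F_v$ or derive a forbidden configuration from the reversal. The only (harmless) difference is that in the reversal subcase you exhibit a $\Delta$-obstruction at $v$ where the paper exhibits an alternating 4-cycle; both are excluded configurations of $F\cup\overline F$, so either closes the case.
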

\begin{proof}
We prove the claim by contradiction. 
Suppose that $F' \cup \overline{F}$ contains 
an alternating 6-cycle $(a_0, b_0, a_1, b_1, a_2, b_2)$ with 
$(a_i, b_i) \in F'$ and $(b_i, a_{i+1}) \in \overline{F}$ 
for any $i = 0, 1, 2$ (indices are modulo 3). 
Suppose $v = a_0$. 
It is obvious that $(a_0, b_0), (a_2, b_2) \notin F_v^{-1}$, 
that is, $(a_0, b_0), (a_2, b_2) \in F$. 
If $(a_1, b_1) \notin F_v^{-1}$, then 
we have from Claim~\ref{fact:AC6} that 
$(a_0, a_1), (a_1, b_1), (b_1, a_0) \in F$, and then 
$(a_1, b_1) \in F_v$, a contradiction. 
If $(a_1, b_1) \in F_v^{-1}$, then 
$(a_0, b_1), (b_1, a_1), (a_1, a_0) \in F$, 
but then $F \cup \overline{F}$ contains 
an alternating 4-cycle $(a_0, b_1, a_2, b_2)$, a contradiction. 
Thus $v \neq a_0$. 
By similar arguments, we have $v \neq a_1, a_2$. 
\par
Suppose $v = b_0$. 
It is obvious that $(a_0, b_0), (a_1, b_1) \notin F_v^{-1}$, 
that is, $(a_0, b_0), (a_1, b_1) \in F$. 
If $(a_2, b_2) \notin F_v^{-1}$, then 
we have from Claim~\ref{fact:AC6} that 
$(b_0, a_2), (a_2, b_2), (b_2, b_0) \in F$, and then 
$(a_2, b_2) \in F_v$, a contradiction. 
If $(a_2, b_2) \in F_v^{-1}$, then 
$(b_0, b_2), (b_2, a_2), (a_2, b_0) \in F$, 
but then $F \cup \overline{F}$ contains 
an alternating 4-cycle $(a_2, b_0, a_1, b_1)$, a contradiction. 
Thus $v \neq b_0$. 
By similar arguments, we have $v \neq b_1, b_2$. 
\end{proof}

The following claim states that 
the reversing the orientation of the edges 
in $F_v$ generates no alternating 6-cycles. 
\begin{claim}
No edges in $F_v^{-1}$ is an edge of any alternating 6-cycle 
of $F' \cup \overline{F}$. 
\end{claim}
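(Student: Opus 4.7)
The plan is to argue by contradiction. Suppose that some edge of $F_v^{-1}$ lies on an alternating 6-cycle of $F' \cup \overline{F}$; by cyclic symmetry we may take this cycle to be $(a_0, b_0, a_1, b_1, a_2, b_2)$ with $(a_0, b_0) \in F_v^{-1}$. By the definition of $F_v$ we immediately get $(a_0, v), (v, b_0), (b_0, a_0) \in F$, and Claim~\ref{claim:2} tells us that $v$ is distinct from all six cycle vertices.

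My first move is to pin down $v$'s location in $\overline{F}$ relative to the cycle using the transitivity of $\overline{F}$. Combining $(v, b_0) \in F$ with $(b_0, a_1) \in \overline{F}$ rules out $(a_1, v) \in \overline{F}$ (since transitivity would then give $(b_0, v) \in \overline{F}$, contradicting $(v, b_0) \in F$), so $v a_1 \in E(G)$ or $(v, a_1) \in \overline{F}$. Symmetrically, $(a_0, v) \in F$ together with $(b_2, a_0) \in \overline{F}$ rules out $(v, b_2) \in \overline{F}$, so $v b_2 \in E(G)$ or $(b_2, v) \in \overline{F}$.

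The next step is a case analysis on which of the three $F'$-edges $(a_0, b_0), (a_1, b_1), (a_2, b_2)$ actually lie in $F_v^{-1}$. Each such membership forces an additional directed 3-cycle in $F$ through $v$, contributing $F$-edges of the form $(a_i, v), (v, b_i)$; combined with the earlier observations this also determines the orientation of the edges $va_1, vb_1, va_2, vb_2$ whenever they exist in $G$. In every case the strategy is then to chase the forced orientations via Claims~\ref{fact:diagonal}, \ref{fact:AC6}, \ref{fact:A}, and~\ref{fact:C3-AC6} until we produce one of the substructures that $F$ cannot contain: an alternating 4-cycle or alternating 4-anticycle in $F \cup \overline{F}$, a $\Delta$-obstruction, a chordless 4-cycle of $G$ whose orientation in $F$ fails to alternate (violating condition~(2) of Step~2), or an undirected edge lying on a chordless 4-cycle (violating condition~(3)).

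The main obstacle is keeping the case analysis organized, since each of the pairs $\{va_1, vb_1, va_2, vb_2\}$ can independently be an edge of $G$ or of $\overline{G}$. The most delicate branch is the one in which $(a_0, b_0)$ is the only reversed edge: there the plan is to use the admissible edge between $v$ and $a_1$ to lift the cycle to an alternating 6-cycle $(a_1, b_1, a_2, b_2, a_0, v)$ of $F \cup \overline{F}$, then apply Claim~\ref{fact:AC6} to force the chord orientations $(v, b_1), (b_1, b_2), (b_2, v) \in F$. This produces a new directed 3-cycle $(v, b_1, b_2)$ in $F$ that places $(b_1, b_2) \in F_v$, so in $F'$ the chord is reversed to $(b_2, b_1)$; tracing this chord through the chordless 4-cycle it completes with the 6-cycle edges then violates the alternating condition on chordless 4-cycles, giving the desired contradiction.
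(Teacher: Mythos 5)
Your setup matches the paper's: argue by contradiction, note that Claim~\ref{claim:2} keeps $v$ off the cycle, and dispose of the case of two reversed edges (the paper does this in one line via the $\Delta$-obstruction $(a_1,v,b_0)$). But the heart of the claim is the case where exactly one edge, say $(a_0,b_0)$, is reversed, and there your argument has a genuine gap. Your only worked-out branch builds the alternating 6-cycle $(a_1,b_1,a_2,b_2,a_0,v)$, which requires $(v,a_1)\in\overline{F}$. What you actually established is only that $(a_1,v)\notin\overline{F}$; the alternative $va_1\in E(G)$ is entirely possible and is not addressed, and in that case the proposed cycle is not an alternating 6-cycle of $F\cup\overline{F}$ at all. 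The remaining cases are described only as ``chase the forced orientations until a forbidden substructure appears,'' which is a plan rather than a proof. Moreover, even in the sub-branch you do treat, the final contradiction is shaky: you derive $(b_1,b_2)\in F_v$ and claim its reversal breaks alternation on some chordless 4-cycle, but you never exhibit that 4-cycle, and the paper's subsequent claim shows that reversing $F_v$ in fact preserves alternation on every chordless 4-cycle, so this is not where the contradiction lives. (The clean contradiction in that sub-branch is immediate: Claim~\ref{fact:AC6} applied to your 6-cycle gives $(b_2,v),(v,a_2)\in F$, hence $(a_2,b_2)\in F_v$, contradicting the assumption that $(a_2,b_2)$ was not reversed.)

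The paper closes the gap you leave open with a different device: since $(v,b_0,a_0)$ is a directed triangle in $F$, Claim~\ref{fact:C3-diamond} supplies an auxiliary vertex $z$ with $(a_0,z),(z,b_0)\in F$ and $vz\in E(\overline{G})$. One then checks $vb_1,za_2\in E(G)$, so by Claim~\ref{fact:diagonal} the quadruple $(v,b_1,z,a_2)$ is a chordless 4-cycle; each of its two possible alternating orientations yields an alternating 6-cycle of $F\cup\overline{F}$ ($(v,b_0,a_1,b_1,a_2,z)$ or $(v,b_1,a_2,b_2,a_0,z)$), and Claim~\ref{fact:AC6} then forces $(a_1,b_1)\in F_v$ or $(a_2,b_2)\in F_v$, the desired contradiction. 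This works uniformly, with no case split on whether $v$ is adjacent to $a_1$ or $b_2$ in $G$ or in $\overline{G}$. To repair your proof you would need either to adopt this auxiliary-vertex argument or to carry out the missing case analysis for $va_1\in E(G)$ explicitly.
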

\begin{proof}
We prove the claim by contradiction. 
Suppose that $F' \cup \overline{F}$ contains 
an alternating 6-cycle $(a_0, b_0, a_1, b_1, a_2, b_2)$ with 
$(a_i, b_i) \in F'$ and $(b_i, a_{i+1}) \in \overline{F}$ 
for any $i = 0, 1, 2$ (indices are modulo 3). 
We have from Claim~\ref{claim:2} that 
the vertex $v$ is not on the alternating 6-cycle. 
If $(a_0, b_0), (a_1, b_1) \in F_v^{-1}$, then 
$(v, b_0), (b_0, a_0), (a_0, v) \in F$ and 
$(v, b_1), (b_1, a_1), (a_1, v) \in F$, 
but then $(a_1, v, b_0)$ is a $\Delta$-obstruction in $F \cup \overline{F}$, 
a contradiction. 
Thus we can see that 
at most one edge on the alternating 6-cycle is in $F_v^{-1}$. 
\par
We assume without loss of generality that $(a_0, b_0) \in F_v^{-1}$ 
and $(a_1, b_1), (a_2, b_2) \notin F_v^{-1}$. 
We have $(v, b_0), (b_0, a_0), (a_0, v) \in F$. 
It is obvious from $a_1b_0, a_0b_2 \in E(\overline{G})$ that $v \neq a_1, b_2$. 
We have $v \neq b_1$ since otherwise 
$(a_1, b_1 = v, b_0)$ is a $\Delta$-obstruction, a contradiction. 
We also have $v \neq a_2$ since otherwise 
$(a_0, v = a_2, b_2)$ is a $\Delta$-obstruction, a contradiction. 
Therefore, the seven vertices $v, a_0, b_0, a_1, b_1, a_2, b_2$ are distinct. 
\par
We have from Claim~\ref{fact:C3-diamond} that 
there is a vertex $z$ such that $(a_0, z), (z, b_0) \in F$ and 
either $(v, z) \in \overline{F}$ or $(z, v) \in \overline{F}$. 
We first suppose $(z, v) \in \overline{F}$. 
It is obvious from $a_1b_0, a_0b_2 \in E(\overline{G})$ that $z \neq a_1, b_2$. 
We have $z \neq b_1$ since otherwise 
$(a_1, b_1 = z, b_0)$ is a $\Delta$-obstruction, a contradiction. 
We also have $z \neq a_2$ since otherwise 
$(a_0, z = a_2, b_2)$ is a $\Delta$-obstruction, a contradiction. 
Therefore, the eight vertices $v, z, a_0, b_0, a_1, b_1, a_2, b_2$ are distinct. 
\par
If $(b_1, v) \in \overline{F}$, then 
$(v, b_0, a_1, b_1)$ is an alternating 4-cycle, a contradiction. 
If $(z, a_2) \in \overline{F}$, then 
$(a_2, b_2, a_0, z)$ is an alternating 4-cycle, a contradiction. 
If $(v, b_1) \in \overline{F}$, then 
$(z, v), (b_1, a_2) \in \overline{F}$ implies $(z, a_2) \in \overline{F}$, 
a contradiction. 
If $(a_2, z) \in \overline{F}$, then 
$(b_1, a_2), (z, v) \in \overline{F}$ implies $(b_1, v) \in \overline{F}$, 
a contradiction. 
Thus $vb_1, za_2 \in E(G)$. 
Now, we have from Claim~\ref{fact:diagonal} that 
$(v, b_1, z, a_2)$ is a chordless cycle of length 4. 
\par
If $(b_1, v), (b_1, z), (a_2, v), (a_2, z) \in F$, then 
$(v, b_0, a_1, b_1, a_2, z)$ is an alternating 6-cycle 
in $F \cup \overline{F}$, and hence we have from Claim~\ref{fact:AC6} 
that $(a_1, b_1) \in F_v$, a contradiction. 
If $(v, b_1), (v, a_2), (z, b_1), (z, a_2) \in F$, then 
$(v, b_1, a_2, b_2, a_0, z)$ is an alternating 6-cycle 
in $F \cup \overline{F}$, and hence we have from Claim~\ref{fact:AC6} 
that $(a_2, b_2) \in F_v$, a contradiction. 
\par
When we suppose $(v, z) \in \overline{F}$, 
we also have either $(a_1, b_1) \in F_v$ or $(a_2, b_2) \in F_v$, 
a contradiction. Thus we have the claim. 
\end{proof}

The following claim states that 
the directions of the edges still alternate in $F'$ 
on every chordless cycle of length 4. 
\begin{claim}
If an edge on a chordless cycle  of length 4 is in $F_v$, 
then all the edges on the cycle is in $F_v$. 
\end{claim}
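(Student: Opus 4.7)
The plan is to fix the chordless 4-cycle $(a,b,c,d)$ with alternating orientation $(a,b),(c,b),(c,d),(a,d)\in F$ and assume without loss of generality that the edge of the cycle in $F_v$ is $(a,b)$, i.e.\ $(b,v),(v,a),(a,b)\in F$ is a directed 3-cycle in $F$. The goal reduces to showing $(v,c),(d,v)\in F$, for then $(c,b),(c,d),(a,d)$ are certified as elements of $F_v$ by the directed 3-cycles $(b,v,c),(d,v,c),(d,v,a)$ in $F$.

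I would first record the easy facts: $v\notin\{a,b,c,d\}$ by direct case check (e.g.\ $v=c$ forces both $(b,v)=(b,c)$ and $(c,b)$ in $F$), and $vc,vd\in E(G)$ via Claim~\ref{fact:diagonal}-style arguments. Specifically, if $vd\in E(\overline{G})$ the no-$\Delta$-obstruction condition applied to $(v,a),(a,d)\in F$ forces $(v,d)\in\overline{F}$, and symmetrically $(c,v)\in\overline{F}$ when $vc\in E(\overline{G})$; both non-edges then yield $(c,d)\in\overline{F}$ by transitivity of $\overline{F}$, contradicting $(c,d)\in F$. If only one of $vc,vd$ is a non-edge, the four vertices $(v,a,d,c)$ (respectively $(v,b,c,d)$) form a chordless 4-cycle of $G$ whose $F$-orientation fails alternation at $a$ (resp.\ $b$), violating condition~(2).

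The crux is $(v,c),(d,v)\in F$, and I would obtain these by applying Claim~\ref{fact:C3-AC6} twice. Applied to the directed 3-cycle $(v,a,b)$ with our given 4-cycle $(a,d,c,b)$ playing the role of the chordless 4-cycle of edge $(a,b)$ in the construction, our $d$ occupies the position of the proof's ``$f$'' vertex (since $(a,d),(c,b),(c,d)\in F$ and $ac,bd\in E(\overline{G})$ match the construction's conditions with $f=d,g=c$). In the resulting alternating 6-cycle $(a_0,b_0,a_1,b_1,a_2,b_2)$ one has $a_0=v$, $b_0=a$, $b_1=b$, $a_2=d$, so Claim~\ref{fact:AC6} delivers the chord $(a_2,a_0)=(d,v)\in F$. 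Re-applying the construction to the cyclic shift $(a,b,v)$ of the same 3-cycle, again using $(a,d,c,b)$ for its first edge, places our $c$ at position $b_2$ (as the proof's ``$e$'' vertex) and $v$ at $b_1$, so Claim~\ref{fact:AC6} yields $(b_1,b_2)=(v,c)\in F$.

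The main obstacle is guaranteeing that the 6-cycles produced by Claim~\ref{fact:C3-AC6} actually contain our specific $c$ and $d$; the claim statement alone only places the three 3-cycle vertices on the 6-cycle. Resolving this requires revisiting the construction in the proof of Claim~\ref{fact:C3-AC6} and exercising the freedom to choose the specific chordless 4-cycle $(a,d,c,b)$ for the edge $(a,b)$, which pins down two of the three remaining 6-cycle vertices and places $d$ (respectively $c$) where we need it, depending on the cyclic shift of the 3-cycle used.
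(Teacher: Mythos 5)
Your setup, the reduction to proving $(v,c),(d,v)\in F$, and the preliminary observations ($v\notin\{a,b,c,d\}$ and $vc,vd\in E(G)$) are correct. The gap is in the crux step. Claim~\ref{fact:C3-AC6} only guarantees an alternating 6-cycle through the three vertices of the directed 3-cycle, and, as you note, you must open up its proof to control the remaining vertices; but the freedom you invoke (choosing $(a,d,c,b)$ as the chordless 4-cycle for the edge $(a,b)$) only pins things down in the final ``main case'' of that proof. The proof of Claim~\ref{fact:C3-AC6} has four early-exit branches in which Claim~\ref{fact:A} is applied to a directed 4-vertex path whose \emph{middle} edge is $(v,a)$ or $(b,v)$ rather than $(a,b)$; in those branches the two extra vertices of the resulting 6-cycle come from a chordless 4-cycle on that other edge, so your $d$ (resp.\ $c$) need not lie on the 6-cycle at all. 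Which branch is taken is dictated by orientations in $\overline{F}$ that you do not get to choose. Concretely, in your first application (3-cycle $(v,a,b)$): if the chordless 4-cycle on $(v,a)$ supplies a vertex $p$ with $(v,p)\in F$ and $(p,a)\in\overline{F}$, the construction exits via Claim~\ref{fact:A} applied to $(a,b,v,p)$, whose six vertices are $a,b,v,p$ plus two vertices taken from a 4-cycle on $(b,v)$; nothing there yields $(d,v)\in F$. Similar failures occur in the other early exits, and the two cyclic shifts you use do not jointly cover all combinations.

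The paper avoids this by not invoking Claim~\ref{fact:C3-AC6} at the point where specific vertices must appear. It first applies Claim~\ref{fact:C3-diamond} to the directed cycle $(v,a,b)$ to obtain a vertex $z$ with $(b,z),(z,a)\in F$ and $zv\in E(\overline{G})$, and then applies Claim~\ref{fact:A} to the directed path $(v,a,b,z)$ or $(z,a,b,v)$ according to the orientation of $zv$ in $\overline{F}$. The middle edge of that path is exactly $(a,b)$, so the two auxiliary vertices $e,f$ of Claim~\ref{fact:A} may be taken to be your $d$ and $c$, and the \emph{explicitly listed oriented chords} in the conclusion of Claim~\ref{fact:A} (rather than the 6-cycle) deliver $(v,c),(d,v)\in F$ in both cases at once. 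Replacing your two invocations of Claim~\ref{fact:C3-AC6} by this single step closes the gap; the rest of your argument then goes through.
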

\begin{proof}
Suppose that there is a chordless cycle $(a, b, c, d)$ of length 4 in $G$ 
such that $(a, b) \in F_v$, that is, 
$(v, a), (a, b), (b, v), (a, d), (c, b), (c, d) \in F$ 
and $ac, bd \in E(\overline{G})$. 
We have from Claim~\ref{fact:C3-diamond} that 
there is a vertex $z$ such that $(b, z), (z, a) \in F$ and 
either $(v, z) \in \overline{F}$ or $(z, v) \in \overline{F}$. 
In either case, we have from Claim~\ref{fact:A} that 
$(v, c), (d, v) \in F$. 
Thus $(a, d), (c, b), (c, d) \in F_v$. 
\end{proof}

The following claim states that 
the reversing the orientation of the edges 
in $F_v$ generates no $\Delta$-obstructions, 
and thus $F' \cup \overline{F}$ still contains 
no $\Delta$-obstructions. 
\begin{claim}
No edges in $F_v^{-1}$ is an edge of any $\Delta$-obstruction 
of $F' \cup \overline{F}$. 
\end{claim}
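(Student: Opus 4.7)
The plan is to assume for contradiction that $F' \cup \overline{F}$ contains a $\Delta$-obstruction $(a,b,c)$—so $(a,b),(b,c)\in F'$ and $(c,a)\in\overline{F}$—in which at least one of $(a,b),(b,c)$ belongs to $F_v^{-1}$. My first step is to rule out the case where both edges are reversed: if $(a,b),(b,c)\in F_v^{-1}$, then $(b,a),(c,b)\in F_v$, and the definition of $F_v$ forces $(v,b)\in F$ from the directed 3-cycle $(a,v,b)$ in $F$ while simultaneously forcing $(b,v)\in F$ from the directed 3-cycle $(b,v,c)$ in $F$, a contradiction with $F$ being an orientation. Hence at most one of the two $F'$-edges is reversed, and by symmetry I may restrict attention to the case $(a,b)\in F_v^{-1}$ and $(b,c)\in F-F_v$; the other case is entirely analogous.

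Under this assumption, $(b,a)\in F_v$ supplies $(a,v),(v,b),(b,a)\in F$, which together with $(b,c)\in F$ and $(c,a)\in\overline{F}$ furnishes exactly the hypothesis of Claim~\ref{fact:A} under the relabeling $(a',b',c',d'):=(a,v,b,c)$. Distinctness of the four vertices is routine: $a\neq b$ from the 3-cycle $(a,v,b)$ in $F$; $v\neq c$ because $v=c$ would yield $(a,v)\in F$ and $(c,a)=(v,a)\in\overline{F}$ simultaneously; the remaining inequalities are immediate from the edges themselves. Claim~\ref{fact:A} then produces, among its listed conclusions, the directed edge $(c,v)\in F$.

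The punchline is now immediate: the three edges $(v,b),(b,c),(c,v)$ all lie in $F$ and form a directed 3-cycle through $v$, so by the definition of $F_v$ the edge $(b,c)$—viewed as $(w,u)$ with $w=b$, $u=c$, and directed 3-cycle $(c,v,b)$—belongs to $F_v$, contradicting $(b,c)\in F-F_v$. The symmetric case $(b,c)\in F_v^{-1}$, $(a,b)\in F-F_v$ proceeds verbatim using the relabeling $(a',b',c',d'):=(a,b,v,c)$ in Claim~\ref{fact:A}, whose conclusion $(v,a)\in F$ produces the directed 3-cycle $(a,b,v)$ in $F$ and thereby places $(a,b)$ in $F_v$, a contradiction. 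I expect no deep obstacle: the only non-automatic parts are pattern-matching the 4-vertex/three-$F$-edge/one-$\overline{F}$-edge configuration to the hypothesis of Claim~\ref{fact:A} under the correct relabeling in each case and verifying distinctness, both of which are entirely routine.
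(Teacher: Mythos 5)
Your proof is correct, and it follows the same overall skeleton as the paper's — argue by contradiction, first rule out the case where both $F'$-edges of the $\Delta$-obstruction lie in $F_v^{-1}$ via the clash $(v,b),(b,v)\in F$, then in the one-reversed-edge case derive the missing $F$-edge that closes a directed triangle through $v$ and forces the unreversed edge into $F_v$. Where you genuinely diverge is in how that missing edge is obtained. You observe that the configuration $(a,v),(v,b),(b,c)\in F$ with $(c,a)\in\overline{F}$ (resp.\ $(a,b),(b,v),(v,c)\in F$ with $(c,a)\in\overline{F}$) matches the hypothesis of Claim~\ref{fact:A} after relabeling, and you read $(c,v)\in F$ (resp.\ $(v,a)\in F$) directly off its list of conclusions; your distinctness checks (in particular $v\neq c$, resp.\ $a\neq v$, via the edge/non-edge clash) are the right ones. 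The paper instead invokes Claim~\ref{fact:C3-diamond} to produce an auxiliary vertex $z$ with $(a,z),(z,b)\in F$ and $z$ comparable to $v$ in $\overline{F}$, splits into sub-cases according to the direction of $vz$, and in each sub-case uses transitivity of $\overline{F}$ plus Claim~\ref{fact:diagonal} to manufacture a chordless 4-cycle from which the same edge $(c,v)\in F$ is read off by alternation. Your route is shorter and avoids both the auxiliary vertex and the sub-case split, at the cost of leaning on the full strength of Claim~\ref{fact:A} (whose proof is the heaviest of the preliminary facts); since that claim is already established at this point in the paper, this is a legitimate and arguably cleaner way to finish.
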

\begin{proof}
We prove the claim by contradiction. 
Suppose that $F' \cup \overline{F}$ contains a $\Delta$-obstruction 
$(a, b, c)$ with $(a, b), (b, c) \in F'$ and $(c, a) \in \overline{F}$. 
If $(a, b), (b, c) \in F_v^{-1}$, then 
$(v, b), (b, a), (a, v) \in F$ and $(v, c), (c, b), (b, v) \in F$, a contradiction. 
\par
Suppose $(a, b) \in F_v^{-1}$ and $(b, c) \notin F_v^{-1}$. 
We have $(v, b), (b, a), (a, v), (b, c) \in F$ and $(c, a) \in \overline{F}$. 
We have from Claim~\ref{fact:C3-diamond} that 
there is a vertex $z$ such that $(a, z), (z, b) \in F$ and 
either $(v, z) \in \overline{F}$ or $(z, v) \in \overline{F}$. 
We first suppose $(v, z) \in \overline{F}$. 
If $(z, c) \in \overline{F}$, then 
$(c, a) \in \overline{F}$ implies $(z, a) \in \overline{F}$, 
contradicting $(a, z) \in F$. 
If $(c, z) \in \overline{F}$, then $(z, b, c)$ is 
a $\Delta$-obstruction, a contradiction. 
Thus $zc \in E(G)$. 
We now have from Claim~\ref{fact:diagonal} that 
$(z, c, v, a)$ is a chordless cycle of length 4. 
Since $(a, v), (a, z) \in F$, we have $(c, v), (c, z) \in F$, 
but then $(v, b), (b, c), (c, v) \in F$ implies $(b, c) \in F_v$, 
a contradiction. 
We next suppose $(z, v) \in \overline{F}$. 
If $(v, c) \in \overline{F}$, then 
$(c, a) \in \overline{F}$ implies $(v, a) \in \overline{F}$, 
contradicting $(a, v) \in F$. 
If $(c, v) \in \overline{F}$, then $(v, b, c)$ is 
a $\Delta$-obstruction, a contradiction. 
Thus $vc \in E(G)$. 
We now have from Claim~\ref{fact:diagonal} that 
$(v, c, z, a)$ is a chordless cycle of length 4. 
Since $(a, v), (a, z) \in F$, we have $(c, v), (c, z) \in F$, 
but then $(v, b), (b, c), (c, v) \in F$ implies $(b, c) \in F_v$, 
a contradiction. 
\par
Suppose $(b, c) \in F_v^{-1}$ and $(a, b) \notin F_v^{-1}$. 
We have $(v, c), (c, b), (b, v), (a, b) \in F$ and $(c, a) \in \overline{F}$. 
We have from Claim~\ref{fact:C3-diamond} that 
there is a vertex $z$ such that $(b, z), (z, c) \in F$ and 
either $(v, z) \in \overline{F}$ or $(z, v) \in \overline{F}$. 
We first suppose $(v, z) \in \overline{F}$. 
If $(a, v) \in \overline{F}$, then 
$(c, a) \in \overline{F}$ implies $(c, v) \in \overline{F}$, 
contradicting $(v, c) \in F$. 
If $(v, a) \in \overline{F}$, then $(a, b, v)$ is 
a $\Delta$-obstruction, a contradiction. 
Thus $va \in E(G)$. 
We now have from Claim~\ref{fact:diagonal} that 
$(v, c, z, a)$ is a chordless cycle of length 4. 
Since $(v, c), (z, c) \in F$, we have $(v, a), (z, a) \in F$, 
but then $(v, a), (a, b), (b, v) \in F$ implies $(a, b) \in F_v$, 
a contradiction. 
We next suppose $(z, v) \in \overline{F}$. 
If $(a, z) \in \overline{F}$, then 
$(c, a) \in \overline{F}$ implies $(c, z) \in \overline{F}$, 
contradicting $(z, c) \in F$. 
If $(z, a) \in \overline{F}$, then $(a, b, z)$ is 
a $\Delta$-obstruction, a contradiction. 
Thus $za \in E(G)$. 
We now have from Claim~\ref{fact:diagonal} that 
$(z, c, v, a)$ is a chordless cycle of length 4. 
Since $(v, c), (z, c) \in F$, we have $(v, a), (z, a) \in F$, 
but then $(v, a), (a, b), (b, v) \in F$ implies $(a, b) \in F_v$, 
a contradiction. 
\end{proof}

\section{Concluding remarks}\label{section:conclusion}
In this paper, we provided a new algorithm for the recognition 
of simple-triangle graphs to improve the time bound 
from $O(n^2 \overline{m})$ to $O(nm)$. 
The algorithm uses the vertex ordering characterization 
in our previous paper~\cite{Takaoka18-DM} that 
a graph is a simple-triangle graph if and only if 
there is a linear ordering of the vertices 
containing both an alternating orientation of the graph and 
a transitive orientation of the complement of the graph. 
The algorithm finds such a vertex ordering 
or report that the given graph is not a simple-triangle graph. 
Correctness of the algorithm is due to 
the following structural characterization of simple-triangle graphs 
(Theorem~\ref{theorem:correctness}): 
a graph $G$ is a simple-triangle graph if and only if 
there is a pair of an alternating orientation $F$ of $G$ and 
a transitive orientation $\overline{F}$ of 
the complement $\overline{G}$ of $G$ such that 
$F \cup \overline{F}$ contains no $\Delta$-obstructions. 
We also showed, as a byproduct, that 
the recognition problem is NP-complete for acyclic alternately orientable graphs 
(Theorem~\ref{theorem:acyclic-alternately-orientable}), and 
alternately orientable cocomparability graphs can be recognized 
in $O(nm)$ time (Theorem~\ref{theorem:alternately-orientable-cocomparability}). 

Finally, we list some open problems related to the results of this paper: 
\begin{itemize}
\item Is there a recognition algorithm for simple-triangle graphs 
with time complexity less than $O(nm)$?
\item Is there a polynomial time algorithm for the isomorphism problem 
for simple-triangle graphs~\cite{Takaoka15-IEICE,Takaoka18-DM,Uehara14-DMTCS}?
\item Is there a forbidden characterization for simple-triangle graphs? 
\end{itemize}



\end{document}